\newcommand{\braket}[2]{{\left\langle #1 \middle| #2 \right\rangle}}
\newcommand{\ket}[1]{{\left| #1 \right\rangle}}
\newcommand{\fref}[1]{Fig.~\ref{#1}}
\newcommand{\sref}[1]{Section~\ref{#1}}
\newcommand{\norm}[1]{\Vert #1 \Vert}
\newtheorem{observation}{Observation}
\renewcommand{\qed}{\hfill \ensuremath{\Box}}
\journalname{Quantum Inf Process}
\begin{document}

\title{Simplifying Continuous-Time Quantum Walks on Dynamic Graphs}

\author{Rebekah Herrman \and Thomas G.~Wong}

\authorrunning{R.~Herrman \and T.~G.~Wong}

\institute{R.~Herrman \at
	Department of Industrial and Systems Engineering, The University of Tennessee, Knoxville, Tennessee 37996, USA \\
	\email{rherrma2@tennessee.edu}
	\and
	T.~G.~Wong \at
	Department of Physics, Creighton University, 2500 California Plaza, Omaha, Nebraska 68178, USA \\
	\email{thomaswong@creighton.edu}
}

\date{Received: date / Accepted: date}

\maketitle

\begin{abstract}
	A continuous-time quantum walk on a dynamic graph evolves by Schr\"odinger's equation with a sequence of Hamiltonians encoding the edges of the graph. This process is universal for quantum computing, but in general, the dynamic graph that implements a quantum circuit can be quite complicated. In this paper, we give six scenarios under which a dynamic graph can be simplified, and they exploit commuting graphs, identical graphs, perfect state transfer, complementary graphs, isolated vertices, and uniform mixing on the hypercube. As examples, we simplify dynamic graphs, in some instances allowing single-qubit gates to be implemented in parallel.
	\keywords{Quantum walk \and Quantum gates \and Dynamic graph}
	\PACS{03.67.Ac, 03.67.Lx}
\end{abstract}


\section{Introduction}

A continuous-time quantum walk is the quantum version of a continuous-time random walk, where the walker hops to adjacent vertices on a graph by evolving by Schr\"odinger's equation
\begin{equation}
	\label{eq:Schrodinger}
	i \frac{d\ket{\psi}}{dt} = H \ket{\psi},
\end{equation} 
where we have set $\hbar = 1$. Continuous-time quantum walks were first introduced by Farhi and Gutmann \cite{FG1998a} as a means of traversing decision trees. In some cases, a classical random walk would take exponential time to traverse the decision tree, whereas the quantum walk would only take polynomial time, although faster classical algorithms existed. Subsequently, Childs \cite{Childs2003} constructed a graph by gluing together two binary trees using a random cycle and showed that a continuous-time quantum walk traversed it exponentially faster than any classical algorithm, relative to an oracle, giving the first exponential speedup by quantum walk. Continuous-time quantum walks have also been used for searching a graph for a marked node \cite{CG2004}, perfect state transfer \cite{Christandl2004}, and evaluating boolean formulas \cite{FGG2008}. They have also been shown to be universal for quantum computation \cite{Childs2009} using a scattering approach, where each computational basis state corresponds to a rail of vertices to walk on. Quantum walks that evolve in discrete-time also exist \cite{Aharonov2001}, but in this paper, quantum walks will henceforth refer to their continuous-time versions.

\begin{figure}
\begin{center}
	\includegraphics{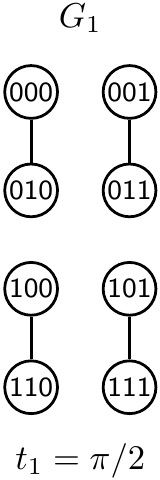} \quad \quad \quad
	\includegraphics{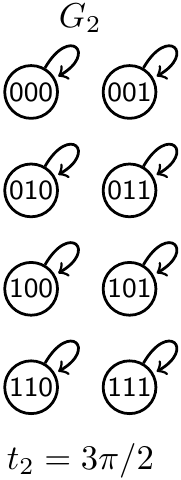}
	\caption{\label{fig:IXI}A dynamic graph of length 2 on eight vertices that implements $I \otimes X \otimes I$.}
\end{center}
\end{figure}

In typical studies of quantum walks, the graph on which the walker moves is static. Some limited research has been done, however, on quantum walks on graphs whose edges change at discrete times, i.e., sequences of graphs. The first was \cite{Underwood2010}, who encoded each computational basis state as a ``rail'' of vertices. Then, \cite{coutinho2019discretization} studied a discretization of continuous-time quantum walks in which tessellations were used to define Hamiltonians which were used to evolve the quantum state. Later \cite{Chakraborty2017}, encoded each computational basis state by a single vertex, and they considered quantum walks that evolved for the same amount of time on each graph of the sequence. Recently, \cite{HH2019} generalized this to permit the walk on each graph to occur for different amounts of time, and these were called dynamic graphs. Formally, a dynamic graph is defined as $\mathcal{G} = \{(G_i, t_i)\}$ for $i \in \mathbb{Z}^+$, where $G_i= (V_i, E_i)$ is a graph with vertex set $V_i$ and edge set $E_i$, and $t_i$ is the amount of time that the walk occurs on graph $G_i$. In this paper, we call a quantum walk on a dynamic graph a \emph{dynamic quantum walk}. Then, the dynamic quantum walk occurs on graph $G_1$ from time $t \in (0,t_1)$, on graph $G_2$ from time $t \in (t_1, t_1+t_2)$, on graph $G_3$ from time $t \in (t_1+t_2,t_1+t_2+t_3)$, and so forth. For example, a dynamic graph of length 2 is shown in \fref{fig:IXI}, and the eight vertices are the computational basis states of three qubits. The dynamic quantum walk evolves on $G_1$ for time $t_1 = \pi/2$ and then on $G_2$ for an additional time of $t_2 = 3\pi/2$, for a total evolution time of $2\pi$. If the initial state of the walker is $\ket{\psi(0)} = a_0 \ket{000} + a_1 \ket{001} + \dots + a_7 \ket{111}$, and the Hamiltonian is equal to the adjacency matrix of the graph (i.e., $H = A$, where $A_{ij} = 1$ if vertices $i$ and $j$ are adjacent, and $A_{ij} = 0$ otherwise), then after the first graph, the state of the walker is
\begin{align*}
    e^{-iA_1t_1} \ket{\psi(0)}
        &= -i \big( a_2 \ket{000} + a_3 \ket{001} + a_0 \ket{010} + a_1 \ket{011} \\
        &\quad\quad\quad + a_6 \ket{100} + a_7 \ket{101} + a_4 \ket{110} + a_5 \ket{111} \big),
\end{align*}
and after the second graph, the state of the walker is
\begin{align*}
    e^{-iA_2t_2} e^{-iA_1 t_2} \ket{\psi(0)}
        &= a_2 \ket{000} + a_3 \ket{001} + a_0 \ket{010} + a_1 \ket{011} \\
        &\quad + a_6 \ket{100} + a_7 \ket{101} + a_4 \ket{110} + a_5 \ket{111}.
\end{align*}
This final state is exactly what would be obtained by applying $I \otimes X \otimes I$ to the initial state, where $X$ is the Pauli-X gate.

Ref.~\cite{HH2019} showed how various quantum gates, such as the Pauli gates, can be implemented using dynamic quantum walks, such as our previous example in \fref{fig:IXI} that applies the $X$-gate to the middle qubit. Ref.~\cite{HH2019} also showed how to implement the universal gate set consisting of the Hadamard gate, $T$-gate, and CNOT gate. Since this approach uses one vertex rather than one rail for each computational basis state, this universality result uses a smaller Hilbert space than \cite{Childs2009} at the expense of dynamically changing the edges. Note while \cite{HH2019} did use ancillary vertices for some graphs, they were shown to be unnecessary in \cite{Wong33} by allowing isolated vertices to be loopless, so a quantum computation in an $N$-dimensional Hilbert space uses exactly $N$ vertices when implemented using a quantum walk on a dynamic graph.

\begin{figure}
\begin{center}
	\subfloat[] {
		\label{fig:layers}
		\includegraphics{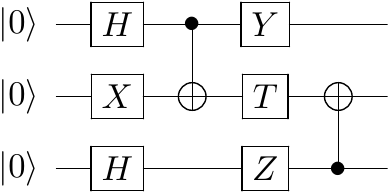}
	}
	
	\subfloat[] {
		\label{fig:sequential}
		\includegraphics{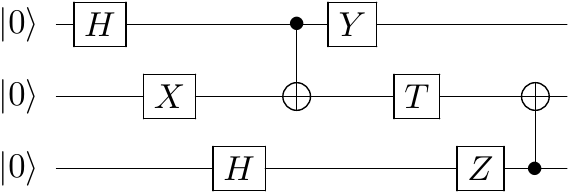}
	}
	\caption{(a) A quantum circuit of three qubits with four alternating layers of one- and two-qubit gates. (b) The same circuit, but expanded to show the order of gates performed by the quantum walk.}
\end{center}
\end{figure}

In \cite{Wong33}, a dynamic quantum walk was used to simulate the circuit in \fref{fig:layers}, which alternates between single-qubit gates and two-qubit gates in a fashion similar to the first quantum computational supremacy experiment \cite{Google2019}. Although the single-qubit gates in \fref{fig:layers} are drawn in such a way as to suggest they are applied in parallel, the dynamic quantum walk actually implemented them sequentially, as shown in \fref{fig:sequential}. This motivates the following question:
\begin{question}
	Is there a way to apply single-qubit gates in parallel using quantum walks on dynamic graphs?
\end{question}
\noindent More broadly,
\begin{question}
	How can a dynamic graph $\mathcal{G}$ be simplified by reducing the number of graphs or evolving for less time?
\end{question}
This paper addresses these two questions. In \sref{sec:timeandhamiltonian}, to have a consistent measure for the amount of time that a dynamic quantum walk takes, we discuss rescaling the Hamiltonian by the spectral norm of the adjacency matrix. In \sref{sec:rules}, we give six characteristics of dynamic graphs that allow them to be simplified so that the resulting dynamic quantum walk either contains fewer graphs or evolves shorter in time, and we give examples of them. The simplifications exploit commuting graphs, identical graphs, perfect state transfer, complementary graphs, isolated vertices, and uniform mixing on the hypercube. Examples are shown for each of these. In \sref{sec:simplification}, we use these observations to simplify the dynamic graph that implements \fref{fig:sequential}. Finally, in \sref{sec:conclusion}, we summarize our results and discuss future work.


\section{Time and the Hamiltonian}\label{sec:timeandhamiltonian}

In previous work on dynamic quantum walks \cite{HH2019,Wong33}, the Hamiltonian was $H = A$. Then, by solving Schr\"odinger's equation \eqref{eq:Schrodinger}, the time-evolution under each graph was $\ket{\psi(t)} = e^{-i A t} \ket{\psi(0)}$. This may not be the best Hamiltonian to use, however, because the jumping rate (amplitude per time) can vary for different graphs, leading to inconsistent measures of time. More precisely, the adjacency matrix of different graphs can have different spectral norms, where the spectral norm of $A$, which we denote $\norm{A}$, is defined as
\[ \norm{A} = \sqrt{\text{largest eigenvalue of $A^\dagger A$}}, \]
where $A^\dagger$ is the conjugate transpose of $A$. Or, since $A$ is symmetric,
\[ \norm{A} = \text{largest eigenvalue of $A$ in absolute value}. \]
For example, the adjacency matrix of the cycle of length 4 (denoted $C_4$) has a spectral norm of $2$, whereas the adjacency matrix of the path graph of length 2 (denoted $P_2$) has a spectral norm of $1$. Then, a quantum walk on $C_4$ has twice the jumping rate of a quantum walk on $P_2$, so it is walking twice as quickly. For a fair comparison, we should multiply $C_4$'s evolution time by two when comparing it to a walk on $P_2$. For example, when $H = A$, perfect state transfer between opposite corners of $C_4$ occurs at $t = \pi/2$, and it occurs on $P_2$ also when $t = \pi/2$. One might say that perfect state transfer takes the same amount of time on each graph, but for a more accurate comparison, $C_4$'s time should be $\pi$. Another way to understand this is through energy. Since the Hamiltonian is the operator corresponding to the total energy of the system, walking on $C_4$ uses twice as much energy as evolving on $P_2$. To make a fair comparison between quantum walks on different graphs, the energy usage should be consistent, so the evolution time of $C_4$ should be doubled.

Rather than rescaling time for each graph, we can have a consistent notion of time by rescaling the Hamiltonian by dividing it by the spectral norm of the adjacency matrix. That is, for the remainder of this paper, we will use the Hamiltonian
\begin{equation}
	\label{eq:Hamiltonian}
	H = \frac{A}{\norm{A}}.
\end{equation}
With this Hamiltonian \eqref{eq:Hamiltonian}, the system evolves under each graph by
\begin{equation}
	\label{eq:evolution}
	\ket{\psi(t)} = e^{-i A t/\norm{A}} \ket{\psi(0)}. 
\end{equation}
Now, the evolution time of different graphs can be directly compared to each other. For example, with this Hamiltonian, perfect state transfer occurs between opposite corners of $C_4$ when $t = \pi$ and on $P_2$ when $t = \pi/2$. Such proper scaling of the Hamiltonian appears in various prior works on quantum walks, such as \cite{moore2002quantum}, where a quantum walk on the $n$-dimensional hypercube is considered, and the Hamiltonian is equal to the adjacency matrix divided by $\norm{A} = n$. Or in \cite{FG1998b,CG2004}, a quantum walk version of Grover's algorithm is presented, and the $O(\sqrt{N})$ optimality of the algorithm relies on constant energy usage.

With the evolution given in \eqref{eq:evolution}, evolution times can be taken modulo the period of the walk. We can calculate the period from the eigenvalues of $A$, which we denote $\lambda_1, \dots, \lambda_N$, and let us denote the corresponding eigenvectors $\ket{\psi_1}, \dots, \ket{\psi_N}$. Expressing the initial state of the quantum walk as a superposition over the eigenvectors, we get
\[ \ket{\psi(0)} = \alpha_1 \ket{\psi_1} + \dots + \alpha_N \ket{\psi_N}. \]
Then, according to \eqref{eq:evolution}, the walk evolves to
\[ e^{-iAt} \ket{\psi(0)} = \alpha_1 e^{-i\lambda_1t/\norm{A}} \ket{\psi_1} + \dots + \alpha_N e^{-i\lambda_Nt/\norm{A}} \ket{\psi_N}. \]
If $\lambda_i \ne 0$, then $e^{-i\lambda_it/\norm{A}}$ is periodic with period $2\pi\norm{A}/\lambda_i$. If $\lambda_i = 0$, then the exponential is 1. Thus, the period of the quantum walk is, if it exists, the least common multiple of $2\pi\norm{A}/\lambda_i$ for all nonzero $\lambda_i$, i.e.,
\[ \text{period} = \text{lcm} \left\{ \frac{2\pi\norm{A}}{\lambda_i} : \lambda_i \ne 0 \right\}. \]
If the least common multiple does not exist, we take the period to be infinite. Then, regardless whether the period is finite or infinite, we can take the evolution time on the graph modulo the period. In the present context of implementing quantum gates using dynamic quantum walks, however, all the previously developed graphs have finite periods \cite{Wong33}, as are all the graphs in this paper. Finally, if $\norm{A} = 1$ and every $\lambda_i$ is in $\{0,1,-1\}$, we can take the evolution time modulo $2\pi$.


\section{Combining Dynamic Quantum Walks}\label{sec:rules}

In this section, we give six observations by which dynamic quantum walks can be simplified, along with examples. We assume each dynamic graph in $\mathcal{G}$ has the same size vertex set. We also write the initial state of each dynamic graph as the superposition $\sum_{i \in V} c_i\ket{i}$, where $\ket{i}$ is the basis state corresponding to vertex $i$ written in binary. From this initial state, we can work through how the state changes under the dynamic quantum walk.


\subsection{Swapping Commuting Sequential Graphs}

The first observation gives a method to swap commuting graphs, meaning their adjacency matrices commute.

\begin{observation}\label{obs:commute}
	If sequential graphs commute, their order in the dynamic graph can be swapped.
\end{observation}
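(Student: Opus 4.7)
The plan is to reduce the statement to the standard linear-algebra fact that commuting matrices have commuting matrix exponentials. Since the dynamic walk evolves under each graph via equation \eqref{eq:evolution}, the state produced by walking on $G_1$ for time $t_1$ followed by $G_2$ for time $t_2$ is
\[
e^{-iA_2 t_2/\norm{A_2}}\, e^{-iA_1 t_1/\norm{A_1}} \ket{\psi(0)},
\]
and what I need to show is that this equals the state obtained by performing $G_2$ before $G_1$, i.e.\ $e^{-iA_1 t_1/\norm{A_1}}\, e^{-iA_2 t_2/\norm{A_2}} \ket{\psi(0)}$.

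First I would observe that if $A_1 A_2 = A_2 A_1$, then scaling by the positive real constants $t_1/\norm{A_1}$ and $t_2/\norm{A_2}$ preserves the commutator, so the matrices $-iA_1 t_1/\norm{A_1}$ and $-iA_2 t_2/\norm{A_2}$ also commute. Next I would invoke the standard fact that if $M$ and $N$ commute, then $e^M e^N = e^{N} e^{M}$ (which follows termwise from the power series, since every product $M^a N^b$ can be freely reordered when $[M,N]=0$). Applying this with $M = -iA_1 t_1/\norm{A_1}$ and $N = -iA_2 t_2/\norm{A_2}$ gives the desired equality of evolutions.

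Since the resulting state is identical for any initial $\ket{\psi(0)}$, the dynamic graphs $\{(G_1,t_1),(G_2,t_2)\}$ and $\{(G_2,t_2),(G_1,t_1)\}$ produce the same unitary, which is precisely what it means for their order to be swappable. An inductive extension then allows any two sequential commuting graphs inside a longer dynamic graph to be swapped, since the evolutions on the surrounding graphs are unaffected.

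The main obstacle here is essentially a non-obstacle: the whole argument rests on the elementary exponential-commutation lemma, so the only thing to be slightly careful about is the rescaling by $1/\norm{A_i}$ in the Hamiltonian \eqref{eq:Hamiltonian}. Because these are nonzero scalars, commutation of $A_1$ and $A_2$ transfers cleanly to commutation of $A_1/\norm{A_1}$ and $A_2/\norm{A_2}$, so the rescaling introduces no subtlety. Thus the proof is short and the statement follows immediately.
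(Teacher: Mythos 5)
Your argument is correct and follows essentially the same route as the paper's own proof, which likewise reduces the claim to the fact that commuting adjacency matrices yield commuting time-evolution exponentials; you simply spell out the intermediate steps (the scalar rescaling preserving the commutator, the power-series justification) more explicitly than the paper does.
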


\begin{proof}
	Suppose $G_\ell$ and $G_{\ell+1}$ commute. Then 
	\[ e^{-i A_{\ell+1} t_{\ell+1}/\norm{A_{\ell+1}}}e^{-i A_\ell t_\ell/\norm{A_\ell}} = e^{-i A_\ell t_\ell/\norm{A_\ell}}e^{-i A_{\ell+1} t_{\ell+1}/\norm{A_{\ell+1}}}. \]
	So by definition, so the order of the graphs in the sequence can be swapped.\qed
\end{proof}

As a special case, note that if the only edges of a graph are self-loops on every vertex, the adjacency matrix is the identity matrix, which commutes with all matrices. 

\begin{example}
	Here, we will consider the $X \otimes X$ gate, which acts on computational basis states as
	\begin{align*}
		(X \otimes X) \ket{00} &= \ket{11}, \\
		(X \otimes X) \ket{01} &= \ket{10}, \\
		(X \otimes X) \ket{10} &= \ket{01}, \\
		(X \otimes X) \ket{11} &= \ket{00}.
	\end{align*}
	In \fref{fig:fullxx}, $X \otimes X$ is implemented using a dynamic quantum walk by sequentially implementing $X \otimes I$ followed by $I \otimes X$ using the results from \cite{HH2019,Wong33}. This acts on the initial state via 
	\begin{align*}
		&c_0 \ket{00} + c_1 \ket{01} + c_2 \ket{10} + c_3 \ket{11}\\
		&\quad \xrightarrow{G_1} -i\big[c_2 \ket{00} + c_3 \ket{01} + c_0\ket{10} + c_1\ket{11}\big] \\
		&\quad \xrightarrow{G_2} c_2\ket{00} + c_3 \ket{01} + c_0\ket{10} + c_1\ket{11} \big] \\
		&\quad \xrightarrow{G_3} -i\big[c_3 \ket{00} + c_2 \ket{01} + c_1\ket{10} + c_0\ket{11}\big] \\ 
		&\quad \xrightarrow{G_4} c_3 \ket{00} + c_2 \ket{01} + c_1\ket{10} + c_0\ket{11} \big].
	\end{align*}
	The total evolution time of this is $4\pi$.
	
	\begin{figure}
	\begin{center}
		\subfloat[] {
			\includegraphics{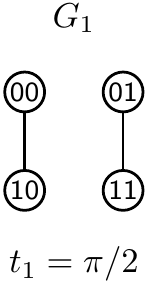} \quad
			\includegraphics{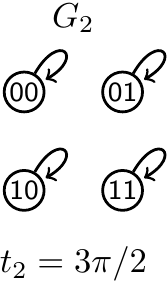} \quad
			\includegraphics{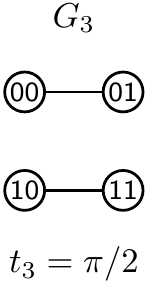} \quad
			\includegraphics{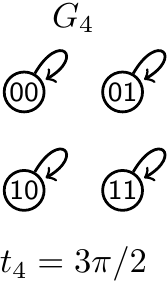}
			\label{fig:fullxx}
		}
	
		\subfloat[] {
			\includegraphics{XX_sequential_1} \quad
			\includegraphics{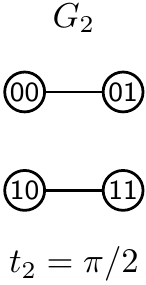} \quad
			\includegraphics{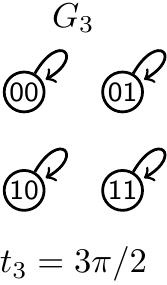} \quad
			\includegraphics{XX_sequential_4}
			\label{fig:commutingxx}
		}
	
		\subfloat[] {
			\includegraphics{XX_sequential_1} \quad
			\includegraphics{XX_comm_2} \quad
			\includegraphics{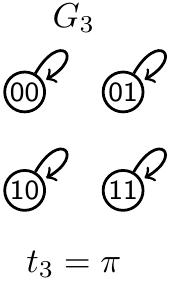}
			\label{fig:simplifiedxx}
		}
		
		\subfloat[] {
			\includegraphics{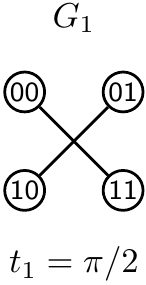} \quad
			\includegraphics{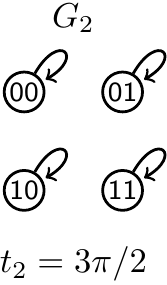} 
			\label{fig:simplifiedxxtwo}
		}
		\caption{Dynamic quantum walks for $X \otimes X$. (a) is the original, sequential implementation, while in (b), commuting graphs were swapped. In, (c) identical graphs were combined, and in (d) graphs with sequential perfect state transfers were combined.}
	\end{center}
	\end{figure}
	
	To begin simplifying this, we use Obs.~\ref{obs:commute}. Since $G_2$ and $G_3$ in \fref{fig:fullxx} commute, we can swap their order. This results in \fref{fig:commutingxx}. It acts on the initial state via
	\begin{align*}
		&c_0 \ket{00} + c_1 \ket{01} + c_2 \ket{10} + c_3 \ket{11}\\
		&\quad \xrightarrow{G_1} -i\big[c_2 \ket{00} + c_3 \ket{01} + c_0\ket{10} + c_1\ket{11}\big] \\
		&\quad \xrightarrow{G_2} -\big[c_3\ket{00} + c_2 \ket{01} + c_1\ket{10} + c_0\ket{11} \big] \\
		&\quad \xrightarrow{G_3} -i\big[c_3\ket{00} + c_2 \ket{01} + c_1\ket{10} + c_0\ket{11} \big] \\
		&\quad \xrightarrow{G_4} c_3 \ket{00} + c_2 \ket{01} + c_1\ket{10} + c_0\ket{11},
	\end{align*}
	which is the same final state as before.
	
	Although this alone does not result in a shorter dynamic graph, when it is used in conjunction with other observations, it can yield a dynamic quantum walk with a shorter evolution time and/ or fewer graphs, as we will see in the next two observations.
\end{example}


\subsection{Combining the Same Sequential Graphs}

The following observation describes how to combine two sequential, identical graphs.

\begin{observation}\label{obs:same}
	If $G_{\ell} = G_{\ell + 1}$, then $G_{\ell}$ can be performed for time $t_\ell+t_{\ell+1}$, modulo the period of graph $G_\ell$, and $G_{\ell + 1}$ can be omitted.
\end{observation}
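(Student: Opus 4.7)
The plan is to reduce the statement directly to the algebraic identity $e^{-iAs}e^{-iAt} = e^{-iA(s+t)}$ for a single matrix $A$, together with the periodicity observation already established in \sref{sec:timeandhamiltonian}. First I would note that the hypothesis $G_\ell = G_{\ell+1}$ means these two graphs have the same vertex set and the same edge set (the vertex sets agree throughout $\mathcal{G}$ by the standing assumption at the start of \sref{sec:rules}), so their adjacency matrices are literally the same matrix $A_\ell = A_{\ell+1}$, and in particular they share the same spectral norm $\norm{A_\ell} = \norm{A_{\ell+1}}$.

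Second I would write down the evolution operator produced by the dynamic walk when stages $\ell$ and $\ell+1$ are applied in succession:
\[
e^{-i A_{\ell+1} t_{\ell+1} / \norm{A_{\ell+1}}}\, e^{-i A_\ell t_\ell / \norm{A_\ell}}
= e^{-i A_\ell t_{\ell+1} / \norm{A_\ell}}\, e^{-i A_\ell t_\ell / \norm{A_\ell}}
= e^{-i A_\ell (t_\ell + t_{\ell+1}) / \norm{A_\ell}},
\]
where the final equality uses that $A_\ell$ commutes with itself, so the exponents simply add. This shows that a single evolution on $G_\ell$ for time $t_\ell + t_{\ell+1}$ reproduces the combined action of the two stages, which is precisely what it means to merge them into one graph and omit $G_{\ell+1}$.

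Third I would justify the ``modulo the period of $G_\ell$'' clause by invoking the periodicity discussion from \sref{sec:timeandhamiltonian}: the evolution $e^{-iA_\ell t / \norm{A_\ell}}$, expressed in the eigenbasis of $A_\ell$, is a product of phases $e^{-i\lambda_j t / \norm{A_\ell}}$, each of which is periodic in $t$, so the overall operator is invariant under shifting $t$ by the period of $G_\ell$ (taken to be infinite if the least common multiple does not exist, in which case no reduction occurs). Hence $t_\ell + t_{\ell+1}$ may be reduced modulo that period without changing the resulting unitary.

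There is no real obstacle here; the only point that warrants care is making the identification $\norm{A_\ell} = \norm{A_{\ell+1}}$ explicit, since without equal norms the two exponents could not be combined into a single scalar multiple of $A_\ell$. Once that is noted, the rest is just the commuting-matrix case of the Baker--Campbell--Hausdorff identity together with the periodicity statement already proved.
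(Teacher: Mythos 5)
Your proposal is correct and follows essentially the same argument as the paper: identify $A_\ell = A_{\ell+1}$ (hence equal spectral norms), combine the two exponentials of the same matrix into one with the summed time, and reduce modulo the period. Your explicit appeal to the eigenbasis periodicity discussion of \sref{sec:timeandhamiltonian} for the ``modulo the period'' clause is a slightly fuller justification than the paper gives, but the route is the same.
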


\begin{proof}
	Since $G_{\ell} = G_{\ell + 1}$, $A_\ell = A_{\ell+1}$, and $\norm{A_\ell} = \norm{A_{\ell+1}}$. Then,
	\[ e^{-i A_{\ell+1} t_{\ell+1}/\norm{A_{\ell+1}}} e^{-i A_\ell t_\ell/\norm{A_\ell}} = e^{-i A_\ell (t_\ell+t_{\ell+1})/\norm{A_\ell}}, \]
	so we can replace both graphs with one graph and evolve by their combined evolution time, modulo the period of $G_\ell$.\qed
\end{proof}

\begin{example}\label{ex:xx}
	Let us continue the example of $X \otimes X$. After using Obs.~\ref{obs:commute}, we had the dynamic quantum walk pictured in \fref{fig:commutingxx}. Now, we can then use Obs.~\ref{obs:same} to combine $G_3$ and $G_4$ into a single graph with evolution time $3\pi/2 + 3\pi/2 = 3\pi \cong \pi \pmod{2\pi}$, since the period of $G_3$ is $2\pi$, to arrive at the walk in \fref{fig:simplifiedxx}. Then, the initial state evolves by
	\begin{align*}
		&c_0 \ket{00} + c_1 \ket{01} + c_2 \ket{10} + c_3 \ket{11}\\
		&\quad \xrightarrow{G_1} -i\big[c_2 \ket{00} + c_3 \ket{01} + c_0\ket{10} + c_1\ket{11}\big] \\
		&\quad \xrightarrow{G_2} -\big[c_3\ket{00} + c_2 \ket{01} + c_1\ket{10} + c_0\ket{11} \big] \\
		&\quad \xrightarrow{G_3} c_3 \ket{00} + c_2 \ket{01} + c_1\ket{10} + c_0\ket{11},
	\end{align*}
	which gives the same final state as before, but with one fewer graph and a total evolution time of $2\pi$, which is half the time of $4\pi$ of the previous implementations. With the next observation, we will be able to simplify the dynamic graph further still.
\end{example}


\subsection{Combining Sequential Perfect State Transfers}
	
The next observation can simplify a dynamic quantum walk when it contains a sequence of graphs where pairs of vertices have perfect state transfer.

\begin{observation}\label{obs:pst}
	Suppose the dynamic graph has $2^n$ vertices, so the vertices can be labeled in binary. Suppose the dynamic graph contains a sequence of graphs $\{ G_\ell, G_{\ell+1}, \dots, G_m \}$, such that for each graph in this sequence, there is perfect state transfer between every pair of vertices whose binary representations differ in the same locations. Then, the sequence of graphs simply performs a sequence of perfect state transfers, so we can replace $G_\ell$ through $G_m$ with a single graph that performs the resulting perfect state transfers. Any phases can be adjusted using isolated vertices with self-loops.
\end{observation}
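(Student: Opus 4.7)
The plan is to pass to the level of basis-state action. By hypothesis, for each graph $G_k$ in the sequence ($\ell \le k \le m$), perfect state transfer occurs between every pair of vertices whose binary labels differ in a common set of bit positions $S_k \subseteq \{1,\dots,n\}$, which I will view as a bitstring. The PST condition forces the unitary $U_k = e^{-iA_k t_k/\norm{A_k}}$ to act as $U_k\ket{x} = \gamma_k(x)\,\ket{x \oplus S_k}$ for every basis state $\ket{x}$, where $\gamma_k(x)$ is a unit-modulus phase.

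Composing gives $U_m U_{m-1}\cdots U_\ell\ket{x} = \Gamma(x)\,\ket{x \oplus S}$ with $S = S_\ell \oplus S_{\ell+1} \oplus \cdots \oplus S_m$ and $\Gamma(x)$ the running product of the phases $\gamma_k$ along the orbit of $x$. Next I would exhibit a single replacement graph $G'$ realizing the XOR-by-$S$ permutation: take the perfect matching $\{(x,x\oplus S) : x \in V\}$, which is a disjoint union of $2^{n-1}$ copies of $P_2$; evolving for time $\pi/2$ under the rescaled Hamiltonian produces simultaneous PST on every edge, i.e.\ $\ket{x}\mapsto -i\,\ket{x\oplus S}$ for every $x$. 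Thus a single graph handles the permutation part of the composition, collapsing $\{G_\ell,\dots,G_m\}$ to one graph.

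To reconcile the residual phase between $\Gamma(x)$ and the uniform $-i$ produced by $G'$, I would prepend or append a graph whose only edges are self-loops on the vertices requiring correction. Its adjacency matrix is diagonal, so the exponential is diagonal; it commutes with every other adjacency matrix (the special case of Observation~\ref{obs:commute}), and tuning the evolution time together with the choice of looped vertices imposes any desired pointwise phase without moving amplitude. The main obstacle is the phase bookkeeping: perfect state transfer only guarantees $|\gamma_k(x)|=1$ a priori, not that $\gamma_k(x)$ is independent of $x$, so one must track $\Gamma(x)$ explicitly to decide where to place loops. In the constructions appearing in this paper the $G_k$ are disjoint unions of congruent PST-gadgets evolved for a common time, so each $\gamma_k$ is a global scalar, $\Gamma$ reduces to a single global phase, and one uniform self-loop graph suffices; in the general case, finer placement of self-loops handles any vertex-dependent residual.
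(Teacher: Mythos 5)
Your proposal is correct and follows essentially the same route as the paper's proof: the net permutation is the XOR (symmetric difference) of the per-graph difference sets, realized by a single perfect matching of $P_2$'s evolved for time $\pi/2$, with residual phases fixed by looped singletons. Your explicit caveat about possibly vertex-dependent PST phases $\gamma_k(x)$ is a slightly more careful statement of the same bookkeeping the paper handles with its ``up to a phase'' language and per-vertex self-loop corrections.
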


\begin{proof}
	Suppose $G_\ell$ has perfect state transfer between vertices whose binary representations differ at digits $k_\ell$, where $k_\ell$ is a set of digits, as a pair of vertices can differ in more than one digit. Say vertices $\ket{a}$ and $\ket{b}$ is one pair of vertices where perfect state transfer occurs under $G_\ell$, and $\ket{c}$ and $\ket{d}$ is another pair. Then, if their initial amplitudes are $c_a$, $c_b$, $c_c$, and $c_d$, respectively, then after evolving on $G_\ell$, $\ket{a}$ has coefficient $c_b$, $\ket{b}$ has coefficient $c_a$, $\ket{c}$ has coefficient $c_d$, and $\ket{d}$ has coefficient $c_c$, all up to a phase.
	
	Now, suppose $G_{\ell+1}$ has perfect state transfer between vertices whose binary representations differ at digits $k_{\ell+1}$. Suppose $\ket{a}$ and $\ket{c}$ is a pair. Then after evolving on $G_{\ell+1}$, $\ket{a}$ has coefficient $c_d$, and $\ket{c}$ has coefficient $c_b$, all up to a phase. Since $a$ and $b$ differ in exactly digits $k_\ell$, and $a$ and $c$ differ in exactly digits $k_{\ell+1}$, $b$ and $c$ differ in digits $(k_\ell \cup k_{\ell+1}) \setminus (k_\ell \cap k_{\ell+1})$. The intersection is subtracted because flipping a digit and then flipping it again gives back the original value of digit. 

	Iterating this process, if there is a sequence of perfect state transfers between vertices that differ in digits $k_\ell, k_{\ell+1}, \dots k_m$, the index of the coefficient in front of state $\ket{a}$ is the bitstring that differs from $a$ in precisely the digits that appear in the union of the $k_i$'s an odd number of times. In fact, at the end of the sequence, each vertex in the graph has the initial coefficient of the vertex whose label in binary differs in exactly the positions of the digits that appear an odd number of times in the $k_i$'s, up to a phase. Thus, instead of evolving through all the perfect state transfers, we can just implement the resulting perfect state transfer by connect vertices that differ an odd number of times in the $k_i$'s by $P_2$'s for $t = \pi/2$. The phases can be corrected by evolving vertices as isolated vertices with self-loops.\qed
\end{proof}

\begin{example}
	Continuing the example of $X \otimes X$, after applying Obs.~\ref{obs:commute} and Obs.~\ref{obs:same}, we had \fref{fig:simplifiedxx}. We will simplify this further using our perfect state transfer observation.

	In $G_1$ of \fref{fig:simplifiedxx}, since the evolution time is $\pi/2$, we have perfect state transfer between vertices whose binary representations differ in their leftmost bits. That is, we have perfect state transfer between vertices 00 and 10, and between vertices 01 and 11, with an overall phase of $-i$. Then, in $G_2$ of \fref{fig:simplifiedxx}, we have perfect state transfer between vertices whose binary representations differ in their rightmost bits, so between vertices 00 and 01, and between vertices 10 and 11, again multiplying the phase of everything by $-i$, for a total phase of $(-i)^2 = -1$. From our observation, we can replace this sequence of perfect state transfers with single perfect state transfers. This sequence ultimately swaps the amplitudes at vertices 00 and 11, and the amplitudes at vertices 01 and 10. Thus, we can connect these pairs of vertices using $P_2$'s for time $\pi/2$ to achieve perfect state transfer between them, as shown in $G_1$ of \fref{fig:simplifiedxxtwo}. This creates an overall phase of $-i$ rather than $-1$, so to correct this, we evolve each vertex with a self-loop for time $\pi/2$ after evolving on $G_1$. We can combine this, however, with $G_3$ of \fref{fig:simplifiedxx} using Obs.~\ref{obs:same} since they are the same graph, resulting in $G_2$ of \fref{fig:simplifiedxxtwo}.

	To double-check our simplified dynamic graph, the walk on \fref{fig:simplifiedxxtwo} acts on the initial state via
	\begin{align*}
		&c_0 \ket{00} + c_1 \ket{01} + c_2 \ket{10} + c_3 \ket{11}\\
		&\quad \xrightarrow{G_1} -i\big[c_3 \ket{00} + c_2 \ket{01} + c_1\ket{10} + c_0\ket{11}\big] \\
		&\quad \xrightarrow{G_2} c_3 \ket{00} + c_2 \ket{01} + c_1\ket{10} + c_0\ket{11},
	\end{align*}
	which agrees with the previous implementations. Although the total runtime of $2\pi$ is the same as \fref{fig:simplifiedxx}, it uses one fewer graph. Altogether, we have simplified $X \otimes X$ from the original implementation in \fref{fig:fullxx}, where the $X$ gates were applied sequentially, to one where the gates are applied in parallel.
\end{example}


\subsection{Combining Sequential Complementary Subgraphs}

The next observation allows us to combine sequential graphs when the second graph is a subgraph of the complement of the first, and the second graph only has edges or self-loops on vertices that do not have edges or self-loops in the first graph.

\begin{observation}\label{obs:comp}
	Consider sequential graphs $G_\ell$ and $G_{\ell+1}$ with respective adjacency matrices $A_{\ell}$ and $A_{\ell + 1}$. Assume $G_{\ell+1}$ is a subgraph of the complement of $G_\ell$, and furthermore $G_{\ell+1}$ only contains edges or self-loops on vertices that do not have edges or self-loops in $G_\ell$. Then, if $\norm{A_\ell} = \norm{A_{\ell+1}}$, then $G_\ell$ and $G_{\ell+1}$ can be combined into one dynamic graph whose adjacency matrix is $A_\ell + A_{\ell+1}$, and if $t_\ell = t_{\ell+1}$, the evolution time of this graph is $t_\ell$. If $t_\ell \neq t_{\ell+1}$, the graphs can still be combined, but an additional graph will be needed to finish the propagation of the graph with the longer time. This results in a shorter overall time but the same number of graphs in the sequence.
\end{observation}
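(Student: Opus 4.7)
The plan is to exploit the fact that the two structural hypotheses---$G_{\ell+1}$ is a subgraph of the complement of $G_\ell$, and $G_{\ell+1}$'s edges and self-loops only touch vertices that are isolated in $G_\ell$---together force $A_\ell$ and $A_{\ell+1}$ to have disjoint vertex supports. Explicitly, any vertex $v$ that appears nontrivially in some row or column of $A_{\ell+1}$ must be isolated in $G_\ell$, so row and column $v$ of $A_\ell$ vanish. This makes the matrix products $A_\ell A_{\ell+1}$ and $A_{\ell+1} A_\ell$ both zero, and in particular the two adjacency matrices commute.

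Next I would verify that $\norm{A_\ell + A_{\ell+1}}$ equals the common norm $\norm{A_\ell} = \norm{A_{\ell+1}}$; call this common value $c$. Permuting the vertex labels to group the support of $A_\ell$, then the support of $A_{\ell+1}$, then the remaining isolated vertices, puts $A_\ell + A_{\ell+1}$ into block-diagonal form whose nontrivial blocks are the nonzero restrictions of $A_\ell$ and $A_{\ell+1}$. The spectral norm of a block-diagonal symmetric matrix is the maximum of the block spectral norms, which here equals $c$.

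With commutativity and the norm identity in hand, the equal-time case $t_\ell = t_{\ell+1} = t$ collapses to a one-line exponent addition:
\[
e^{-iA_{\ell+1} t/\norm{A_{\ell+1}}}\, e^{-iA_\ell t/\norm{A_\ell}} = e^{-i(A_\ell + A_{\ell+1}) t/c} = e^{-i(A_\ell + A_{\ell+1}) t/\norm{A_\ell + A_{\ell+1}}},
\]
which is precisely the evolution on the combined graph for time $t$. For the unequal-time case, assume without loss of generality $t_\ell > t_{\ell+1}$. I would split $e^{-iA_\ell t_\ell/c} = e^{-iA_\ell (t_\ell - t_{\ell+1})/c}\, e^{-iA_\ell t_{\ell+1}/c}$ and then use commutativity to consolidate the matching $t_{\ell+1}$ factors into $e^{-i(A_\ell + A_{\ell+1}) t_{\ell+1}/c}$, leaving the residual factor $e^{-iA_\ell (t_\ell - t_{\ell+1})/c}$. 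This replaces two sequential graphs of total evolution time $t_\ell + t_{\ell+1}$ with two sequential graphs of total time $t_\ell$, matching the observation's claim.

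The only step requiring thought is the spectral-norm equality $\norm{A_\ell + A_{\ell+1}} = c$, without which the combined graph's rescaled Hamiltonian in \eqref{eq:Hamiltonian} would not match the product of rescaled Hamiltonians appearing in the split exponents; the disjoint-support hypothesis disposes of this point at once via the block-diagonal permutation, making the remainder of the proof routine.
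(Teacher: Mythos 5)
Your proposal is correct and follows essentially the same route as the paper: disjoint vertex supports give $A_\ell A_{\ell+1} = A_{\ell+1} A_\ell = 0$, hence commutativity, and the factorization $e^{-iA_{\ell+1}t_{\ell+1}/c}e^{-iA_\ell t_\ell/c} = e^{-iA_\ell(t_\ell - t_{\ell+1})/c}e^{-i(A_\ell+A_{\ell+1})t_{\ell+1}/c}$ handles both the equal- and unequal-time cases. The only cosmetic difference is that you justify $\norm{A_\ell + A_{\ell+1}} = \max\{\norm{A_\ell},\norm{A_{\ell+1}}\}$ directly via the block-diagonal permutation, where the paper cites a reference for the same fact.
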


\begin{proof}
	Since $G_{\ell + 1}$ is contained in the complement of $G_{\ell}$ and only contains edges or self-loops on vertices that do not have edges or self-loops in $G_\ell$, the entries of $A_\ell + A_{\ell+1}$ are only 0's and 1's. Additionally, since $A_\ell$ and $A_{\ell+1}$ are adjacency matrices, they are symmetric, so their sum is also symmetric. So, $A_\ell + A_{\ell+1}$ is a valid adjacency matrix.
	
	Next, we prove that $A_\ell$ and $A_{\ell+1}$ commute. Let $V_\ell$ be the set of vertices in $G_\ell$ that have edges or self-loops, and let $V_\ell'$ be the set of vertices in $G_\ell$ that do not have edges or self-loops, so $V = V_\ell \cup V_\ell'$. Without loss of generality, we label the vertices in $V_\ell$ as $1, 2, \dots, |V_\ell|$ and the vertices of $V_\ell'$ as $|V_\ell|+1, \dots, |V_\ell|+|V_\ell'|$. Then the adjacency matrix $A_\ell$ has dimensions $(|V_1| + |V_2|) \times (|V_1| + |V_2|)$, and it has 1's and 0's in the upper left $|V_\ell| \times |V_\ell|$ corner and zeros everywhere else. In contrast, the adjacency matrix $A_{\ell+1}$ has 1's and 0's in the lower right $|V_{\ell+1}| \times |V_{\ell+1}|$ corner with 0's everywhere else. Then, $A_\ell A_{\ell+1} = A_{\ell+1} A_\ell = 0$, so they commute.

	Without loss of generality, say $t_\ell \ge t_{\ell+1}$. Then, if $\norm{A_\ell} = \norm{A_{\ell+1}}$,
	\begin{align*}
		& e^{-i A_{\ell+1} t_{\ell + 1} /\norm{A_{\ell+1}}} e^{-i A_\ell t_\ell /\norm{A_\ell}} \\ 
		&\quad= e^{-i A_\ell (t_\ell - t_{\ell+1}) /\norm{A_\ell}} e^{-i (A_\ell+A_{\ell+1}) t_{\ell+1}/\norm{A_\ell}}.
	\end{align*}
	Since $\norm{A_\ell + A_{\ell+1}} = \max \{\norm{A_\ell}, \norm{A_{\ell+1}}\} = \norm{A_\ell}$, where the first equality is from Section 1.3.7 of \cite{BH2012}, and the second equality is from $\norm{A_\ell} = \norm{A_{\ell+1}}$, the right exponential is a continuous-time quantum walk on the graph with adjacency matrix $A_\ell + A_{\ell+1}$. If $t_\ell \ne t_{\ell+1}$, then the left exponential is a quantum walk that finishes the propagation on the graph with the longer time, which we took to be $G_\ell$ without loss of generality.\qed
\end{proof}

\begin{example}
	We will illustrate this observation by acting on a qubit with $H$ and then acting on it with $Z$. This acts on the basis states as 
	\begin{align*}
		ZH \ket{0} &= \frac{1}{\sqrt{2}} (\ket{0} - \ket{1}), \\
		ZH \ket{1} &= \frac{1}{\sqrt{2}} (\ket{0} + \ket{1}).
	\end{align*}
	Implementing $H$ followed by $Z$ using the results from \cite{Wong33} gives the dynamic graph found in \fref{fig:fullzh}. It acts on the initial state as follows:
	\begin{align*}
		&c_0 \ket{0} + c_1 \ket{1}\\
		&\quad \xrightarrow{G_1} -ic_0 \ket{0} + c_1 \ket{1}\\
		&\quad \xrightarrow{G_2} \frac{1}{\sqrt{2}} \big[-i(c_0+c_1) \ket{0} + (-c_0+ c_1) \ket{1}\big] \\
		&\quad \xrightarrow{G_3} \frac{1}{\sqrt{2}} \big[i(c_0+c_1) \ket{0} + (c_0 - c_1) \ket{1}\big] \\ 
		&\quad \xrightarrow{G_4} \frac{1}{\sqrt{2}} \big[(c_0+c_1) \ket{0} + (c_0- c_1) \ket{1}\big]\\
		&\quad \xrightarrow{G_5} \frac{1}{\sqrt{2}} \big[(c_0+c_1) \ket{0} + (-c_0 + c_1) \ket{1}\big].
	\end{align*}
	The total evolution time is $13\pi/4 \approx 10.2$.
	
	This dynamic graph can be simplified by noting that $G_5$ is a subgraph of the complement of $G_4$, and furthermore that $G_5$ contains a self-loop on vertex 1, and vertex 1 does not have any edges or self-loops in $G_4$. Although $\norm{A_4} = \norm{A_5}$, their evolution times of $\pi/2$ and $\pi$ differ. So, we can only combine a part of $G_5$, leaving an additional graph to finish its propagation. 
	This is shown in \fref{fig:simplifiedzh}, where $G_4$ is the combined graph, and $G_5$ finishes the longer evolution. This final graph acts on the initial state via
	\begin{align*}
		&c_0 \ket{0} + c_1 \ket{1}\\
		&\quad \xrightarrow{G_1} -ic_0 \ket{0} + c_1 \ket{1}\\
		&\quad \xrightarrow{G_2} \frac{1}{\sqrt{2}} \big[-i(c_0+c_1) \ket{0} + (-c_0+ c_1) \ket{1}\big] \\
		&\quad \xrightarrow{G_3} \frac{1}{\sqrt{2}} \big[i(c_0+c_1) \ket{0} + (c_0 - c_1) \ket{1}\big] \\ 
		&\quad \xrightarrow{G_4} \frac{1}{\sqrt{2}} \big[(c_0+c_1) \ket{0} -i (c_0- c_1) \ket{1}\big]\\
		&\quad \xrightarrow{G_5} \frac{1}{\sqrt{2}} \big[(c_0+c_1) \ket{0} + (-c_0 + c_1) \ket{1}\big].
	\end{align*}
	The two walks give the same final state. The total evolution time of this simplified graph is $11\pi/4 \approx 8.6$, which is a roughly 15\% speedup over the original implementation.
	
	\begin{figure}
	\begin{center}
		\subfloat[] {
			\includegraphics{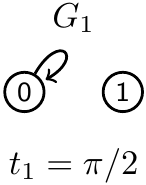} \quad
			\includegraphics{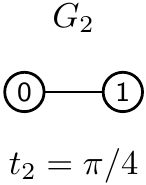} \quad
			\includegraphics{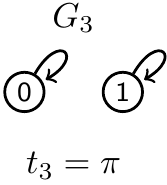} \quad
			\includegraphics{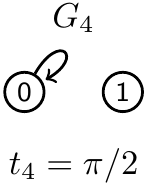} \quad
			\includegraphics{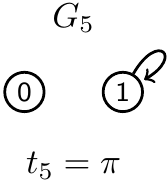}
			\label{fig:fullzh}
		}
	
		\subfloat[] {
			\includegraphics{ZH_sequential_1} \quad
			\includegraphics{ZH_sequential_2} \quad
			\includegraphics{ZH_sequential_3} \quad
			\includegraphics{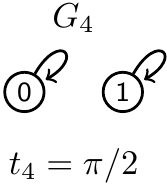} \quad
			\includegraphics{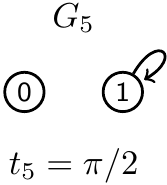}
			\label{fig:simplifiedzh}
		}
		\caption{(a) The original dynamic quantum walk for $H$ acting on one qubit followed by $Z$ acting on one qubit. (b) The simplified dynamic graph using the complementary graph observation.}
	\end{center}
	\end{figure}
\end{example}


\subsection{Moving Looped Singletons}

The fifth observation allows us to change the order in which some isolated vertices with self-loops, which we call \emph{looped singletons} for brevity, are allowed to accumulate phases.
\begin{observation}\label{obs:singletons}
	If a vertex $v$ is propagated as a looped singleton in $G_\ell$ for time $t_\ell$, and vertex $v$ is not adjacent to any other vertices in $G_j$ for all $a \leq j \leq b$, we can instead propagate vertex $v$ as a looped singleton in any $G_j$ for time $t_j$ between $G_a$ and $G_b$ that satisfies $t_j/\norm{A_j} = t_\ell/\norm{A_\ell}$. If we choose to propagate it as a looped singleton in a graph $G_j$ where it is already connected as a looped singleton, we evolve the singleton by time $2t_j$, modulo the period of $G_j$.
\end{observation}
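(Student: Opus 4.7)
The plan is to reduce Observation~\ref{obs:singletons} to Observation~\ref{obs:commute} by showing that, within the range $G_a,\dots,G_b$, the looped-singleton evolution on vertex $v$ factors out of every graph's propagator as a commuting scalar. First I would write, for each $G_j$ with $a\le j\le b$,
\[
  A_j = B_j + \delta_j\,\ketbra{v}{v},
\]
where $\delta_j\in\{0,1\}$ records whether $v$ carries a self-loop in $G_j$, and $B_j$ is obtained from $A_j$ by zeroing the row and column indexed by $v$. Because $v$ is isolated in every graph in this range, the entire row and column of $A_j$ at $v$ contains only the diagonal entry $\delta_j$, so $B_j\ket{v}=0$ and $\bra{v}B_j=0$. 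Hence $B_j$ and $\ketbra{v}{v}$ act on orthogonal subspaces, commute, and
\[
  e^{-iA_jt_j/\norm{A_j}} \;=\; e^{-i\delta_j t_j\ketbra{v}{v}/\norm{A_j}}\;e^{-iB_jt_j/\norm{A_j}}.
\]
The left factor is a scalar phase on $\mathrm{span}\{\ket{v}\}$ and the identity elsewhere, so it commutes with every $B_k$ and every $\ketbra{v}{v}$ that appears in the range.

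Next I would collect all such singleton factors for $v$ across $G_a,\dots,G_b$ by invoking Observation~\ref{obs:commute} iteratively to slide them past each $B_j$ and past each other. Their product contributes a total scalar $\exp\!\bigl(-i\sum_{j=a}^{b}\delta_j t_j/\norm{A_j}\bigr)$ on $\ket{v}$; in particular it contains the summand $e^{-it_\ell/\norm{A_\ell}}$ coming from $G_\ell$. To transfer the self-loop from $G_\ell$ to a chosen $G_j$ in the range, I redistribute this sum: under the hypothesis $t_j/\norm{A_j}=t_\ell/\norm{A_\ell}$, one can delete the $G_\ell$ summand and inject an equal summand at index $j$. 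Concretely, if $v$ was isolated without a loop in $G_j$, replacing $B_j$ by $B_j+\ketbra{v}{v}$ and keeping time $t_j$ reproduces the same phase $e^{-it_j/\norm{A_j}}=e^{-it_\ell/\norm{A_\ell}}$; if $v$ already carried a loop in $G_j$, then the coefficient at $\ketbra{v}{v}$ becomes $2t_j$, giving phase $e^{-2it_j/\norm{A_j}}$, which equals the original product of the two singleton phases.

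The main obstacle I anticipate is the bookkeeping in the second case, where the singleton evolution on $v$ is merged into an already looped vertex and must be expressed as a genuine continuous-time evolution of duration $2t_j$ on $G_j$ modulo its period. This is handled by observing that the singleton propagator restricted to $\mathrm{span}\{\ket{v}\}$ is multiplication by a scalar phase depending only on $t/\norm{A}$, so scalar phases combine additively in their exponents; since the period of $G_j$ is (by the discussion in \sref{sec:timeandhamiltonian}) a common multiple of $2\pi\norm{A_j}/\lambda$ for each nonzero eigenvalue $\lambda$ of $A_j$, and $\ketbra{v}{v}$ has eigenvalue $1$ on $\ket{v}$, reducing $2t_j$ modulo this period leaves every eigencomponent, and hence the whole evolution, invariant. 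This closes the argument and completes the reduction to Observation~\ref{obs:commute}.
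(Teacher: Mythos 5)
Your proposal is correct and follows essentially the same route as the paper: decompose each adjacency matrix as the rank-one singleton term $\ketbra{v}{v}$ plus a remainder whose row and column at $v$ vanish, use the resulting commutation to slide the singleton phase through the range $G_a,\dots,G_b$, and then reabsorb it into the chosen $G_j$ with the two cases (loop absent versus loop already present) handled exactly as in the paper. The only detail worth making explicit is that adding $\ketbra{v}{v}$ to $A_j$ does not change $\norm{A_j}$ (since the two pieces act on orthogonal subspaces and $\norm{A_j}\ge 1$), which justifies keeping the evolution time $t_j$ after the merge.
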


\begin{proof}
	Without loss of generality, suppose vertex $v = \ket{0 \dots 0}$ is a looped singleton in $G_\ell$. Let $A_\ell$ be the adjacency matrix of the graph $G_\ell$. Since $v$ is a looped singleton, we can rewrite this matrix as the sum of two matrices, $A_v$ and $A_\ell'$, where $A_v$ is the matrix with a 1 in its upper-left corner and zeros everywhere else. Let $A_\ell' = A_\ell - A_v$. Since $v$ is a looped singleton, the topmost row and leftmost column of $A_\ell'$ consists of all 0's. Then, $A_v A'_\ell = A'_\ell A_v = 0$, so $A_v$ and $A'_\ell$ commute. Then,
	\begin{align*}
		e^{-i A_\ell t_\ell / \norm{A_\ell}}
			&= e^{-i (A_v + A_\ell')t_\ell / \norm{A_\ell}} \\
			&= e^{-i A_v t_\ell / \norm{A_\ell}} e^{-i A_\ell' t_\ell / \norm{A_\ell}}.
	\end{align*}
	So, we have split $G_\ell$ into two graphs, one without the looped singleton and with adjacency matrix $A'_\ell$ and evolution time $t_\ell$, and another with the looped singleton and with adjacency matrix $A_v$, which evolves for time $t_\ell / \norm{A_\ell}$ (since $\norm{A_v} = 1$).

	Now, say vertex $v$ does not share an edge with any other vertex in all $G_j$, where $m \ge j > \ell$. Then, the topmost row and leftmost column of $A_j$ consists of all 0's, except possibly the top-left corner, which is 0 if vertex $v$ does not have a self-loop in $G_j$, and 1 if it does have a self-loop in $G_j$. Then, $A_v A_j = A_j A_v = 0$ if the top-left corner of $A_j$ is 0, and $A_v A_j = A_j A_v = A_v$ if the top-left corner of $A_j$ is 1. Either way, $A_v$ and $A_j$ commute. Thus, $e^{-i A_v t_\ell / \norm{A_\ell}}$ commutes with all $e^{-i A_j t_j / \norm{A_j}}$, so we can move the looped singleton through all the $G_j$'s.
	
	Say we moved the looped singleton next to one of the $G_j$'s, so the relevant part of the time-evolution operator is
	\[ e^{-iA_jt_j/\norm{A_j}} e^{-iA_vt_\ell/\norm{A_\ell}} = e^{-i \left( A_j t_j + A_v t_\ell \norm{A_j}/\norm{A_\ell} \right) / \norm{A_j}}. \]
	Now, we consider two cases. First, $A_j$ is also just a looped singleton at $v$ with no other edges, then $A_j = A_v$, and $\norm{A_j} = \norm{A_v} = 1$, so the previous exponential becomes
	\[ e^{-i A_v \left( t_j + t_\ell/\norm{t_\ell} \right)}. \]
	Thus, the two looped singleton graphs can combined into a single one with evolution time $t_j + t_\ell/\norm{A_\ell}$, modulo the period of $G_j$.
	
	Second, if $v$ is not a looped singleton in $G_j$, and furthermore if $t_j/\norm{A_j} = t_\ell/\norm{A_\ell}$, then the time-evolution of $G_j$ and the looped singleton becomes
	\[ e^{-i \left( A_j + A_v \right) t_j / \norm{A_j}}. \]
	Note $A_j + A_v$ is a valid adjacency matrix, and it describes graph $G_j$ combined with the looped singleton, and it evolves for time $t_j$. Note adding the looped singleton does not affect $\norm{A_j}$.
	
	Finally, note if $t_j/\norm{A_j} \ne t_\ell/\norm{A_\ell}$ because $t_\ell$ is too long, we can split the evolution time of the looped singleton up into $t_\ell = t_1 + t_2$, where $t_j/\norm{A_j} = t_1/\norm{A_\ell}$. Then, we can combine that with $A_j$, and we have another graph that finishes the phase accumulation of the looped singleton for time $t_2$.\qed
\end{proof}

\begin{example}\label{ex:yz}
	For this example, we consider $Y \otimes Z$. This acts on the computational basis states as 
	\begin{align*}
		(Y \otimes Z) \ket{00} &= -i \ket{10}, \\
		(Y \otimes Z) \ket{01} &= i\ket{11}, \\
		(Y \otimes Z) \ket{10} &= i\ket{00} , \\
		(Y \otimes Z) \ket{11} &= -i \ket{01}.
	\end{align*}
	Implementing this sequentially using $Y \otimes I$ followed by $I \otimes Z$ using the results from \cite{Wong33}, we get the dynamic graph shown in \fref{fig:fullyz}. It evolves the initial state according to
	\begin{align*}
		&c_0 \ket{00} + c_1 \ket{01} + c_2 \ket{10} + c_3 \ket{11}\\
		&\quad \xrightarrow{G_1} -i\big[c_2 \ket{00} + c_3 \ket{01} + c_0\ket{10} + c_1\ket{11}\big] \\
		&\quad \xrightarrow{G_2} i\big[-c_2\ket{00} - c_3 \ket{01} + c_0\ket{10} + c_1\ket{11} \big] \\
		&\quad \xrightarrow{G_3} i\big[-c_3 \ket{00} + c_2 \ket{01} + c_1\ket{10} - c_0\ket{11}\big].
	\end{align*}
	The total evolution time of this is $5\pi/2$.
	
	\begin{figure}
	\begin{center}
		\subfloat[]{
			\includegraphics{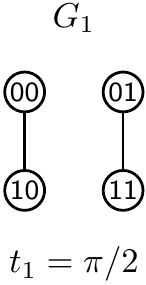}\quad
			\includegraphics{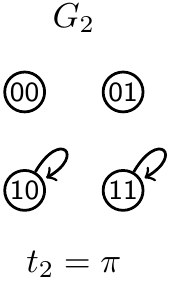}\quad
			\includegraphics{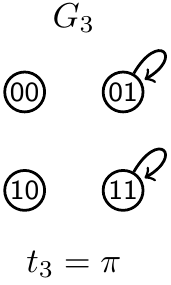}
			\label{fig:fullyz}
		}
		
		\subfloat[]{
			\includegraphics{YZ_1}\quad
			\includegraphics{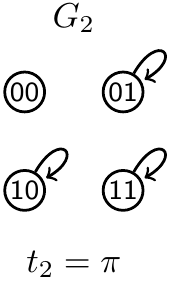}\quad
			\includegraphics{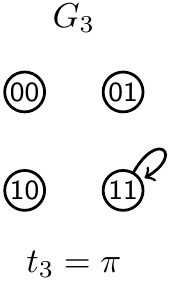}
			\label{fig:intermediateyz}
		}
	
		\subfloat[]{
			\includegraphics{YZ_1}\quad
			\includegraphics{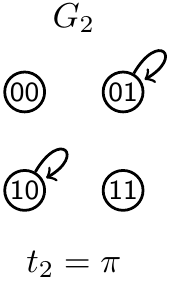}\quad
			\includegraphics{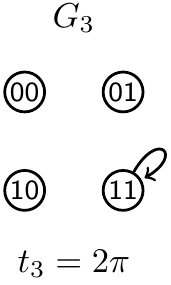}
			\label{fig:interyz}
		}
	
		\subfloat[]{
			\includegraphics{YZ_1} \quad
			\includegraphics{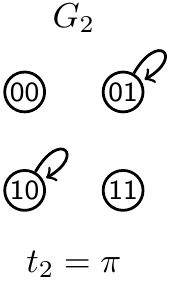}
			\label{fig:simplifiedyz}
		}

		\caption{(a) The dynamic quantum walk for $Y \otimes I$ followed by $I \otimes Z$. (b) The dynamic graph after moving a singleton. (c) The dynamic graph after moving another singleton. (d) The dynamic graph after eliminating a graph.}
	\end{center}
	\end{figure}
	
	We can simplify this dynamic graph by moving the looped singletons. First, we can move the self-loop at $\ket{01}$ in $G_3$ to $G_2$. This is because $\ket{01}$ does not have a self-loop in $G_2$, and $t_3/\norm{A_3} = t_2/\norm{A_3}$. This results is \fref{fig:intermediateyz}. Next, we move the self-loop at $\ket{11}$ in $G_2$ to $G_3$. Since $G_3$ is just a looped singleton at $\ket{11}$, we can combine these with a total evolution time of $\pi + \pi/1 = 2 \pi$. This yields \fref{fig:interyz}. Finally, the evolution time of $G_3$ can be taken modulo the period of $G_3$, which is $2\pi$, so the evolution time is zero. So, we can drop $G_3$ entirely, resulting in \fref{fig:simplifiedyz}. As a check, it evolves the initial state according to
	\begin{align*}
		&c_0 \ket{00} + c_1 \ket{01} + c_2 \ket{10} + c_3 \ket{11}\\
		&\quad \xrightarrow{G_1} -i\big[c_2 \ket{00} + c_3 \ket{01} + c_0\ket{10} + c_1\ket{11}\big] \\
		&\quad \xrightarrow{G_2}  i\big[-c_3\ket{00} + c_2 \ket{01} + c_1\ket{10} - c_0\ket{11} \big].
	\end{align*}
	This gives the same end result as $Y \otimes I$ followed by $I \otimes Z$, but with one fewer graph and an overall time of $3\pi/2$, which is a speedup of 40\%.
\end{example}


\subsection{Implementing Hadamards using Hypercube Uniform Mixing}

The final observation about mixing on hypercubes allows us to efficiently implement the Hadamard gate acting on $n$ qubits, which appears in several basic quantum algorithms such as Deutsch's algorithm for parity, the Deutsch-Jozsa algorithm for distinguishing between constant and balanced functions, and the Bernstein-Vazirani algorithm for determining a binary string that an oracle function dot-products with its input \cite{NielsenChuang2000}.
\begin{observation}\label{obs:hypercube}
	We can use uniform mixing on the hypercube to apply the Hadamard gate to multiple qubits in parallel.
\end{observation}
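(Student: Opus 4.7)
My plan is to show that $H^{\otimes n}$ decomposes as a single hypercube evolution sandwiched between two layers of single-qubit phase gates, each easily realized in the dynamic graph framework. Labeling the $2^n$ vertices of the $n$-cube $Q_n$ by $n$-bit strings (so that edges join strings of Hamming distance one), the adjacency matrix decomposes as $A_{Q_n}=\sum_{k=1}^n X_k$, where $X_k=I^{\otimes(k-1)}\otimes X\otimes I^{\otimes(n-k)}$; its spectral norm is $\|A_{Q_n}\|=n$, attained by the uniform eigenvector. Since the $X_k$ mutually commute, the rescaled time-evolution factorizes across the qubits:
\[
e^{-iA_{Q_n}t/n} \;=\; \prod_{k=1}^n e^{-iX_k t/n}.
\]

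Next I would select the uniform-mixing time $t=n\pi/4$, at which each factor equals $R_X(\pi/2)=e^{-iX\pi/4}=\tfrac{1}{\sqrt 2}(I-iX)$. This distributes amplitude uniformly in magnitude across all $2^n$ basis states from any computational-basis input, which is exactly the hypercube uniform mixing phenomenon of \cite{moore2002quantum}. A direct $2\times 2$ matrix computation verifies the key algebraic identity
\[
H \;=\; S\,R_X(\pi/2)\,S, \qquad S \;=\; \mathrm{diag}(1,i),
\]
and taking tensor products over all $n$ qubits yields
\[
H^{\otimes n} \;=\; S^{\otimes n}\, e^{-iA_{Q_n}\pi/4}\, S^{\otimes n},
\]
so one hypercube walk at the uniform-mixing time, flanked by two $S^{\otimes n}$ layers, realizes parallel Hadamards on all $n$ qubits.

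To finish, I would implement $S^{\otimes n}$ using self-loops: a self-loop on a vertex evolved for time $t$ multiplies its amplitude by $e^{-it}$, so time $t=3\pi/2$ produces the required phase $i$. Thus $S$ on qubit $k$ corresponds to the graph whose only edges are self-loops on exactly the $2^{n-1}$ vertices with $x_k=1$ (the rest being loopless isolated points), evolved for time $3\pi/2$; repeating for $k=1,\ldots,n$ and invoking Obs.~\ref{obs:commute} and Obs.~\ref{obs:singletons} to reorder or merge with neighboring graphs realizes $S^{\otimes n}$. The main obstacle is making the parallelism claim precise: although the $S^{\otimes n}$ correction nominally requires $n$ self-loop graphs (one per qubit, since self-loops in an unweighted graph cannot stack), each such graph is diagonal and cheap, whereas implementing Hadamards qubit-by-qubit via the construction of \cite{Wong33} costs several non-diagonal graphs per qubit. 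The hypercube walk therefore compresses the nontrivial mixing for all $n$ qubits into a single step of length $n\pi/4$, which after the rescaling of Sec.~\ref{sec:timeandhamiltonian} yields strictly fewer graphs and less total evolution time than the sequential alternative for all $n\ge 2$.
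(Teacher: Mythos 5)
Your construction is correct and lands on the same dynamic graph as the paper's (a diagonal phase layer, one hypercube walk for time $n\pi/4$, another phase layer), but you reach it by a genuinely different and cleaner argument. The paper works entrywise: it writes $(H^{\otimes n})_{x,y}=(-1)^{x\cdot y}/2^{n/2}$, quotes Moore--Russell's closed form for the hypercube amplitudes $i^{d_H(x,y)}/2^{n/2}$ at $t=n\pi/4$, and then solves the condition $\omega'_x\omega_y\, i^{d_H(x,y)}=(-1)^{x\cdot y}$ for the correcting phases using the combinatorial identity $d_H(x,y)=h(x)+h(y)-2\,x\cdot y$. You instead exploit $A_{Q_n}=\sum_k X_k$ with commuting summands to factor the walk as $R_X(\pi/2)^{\otimes n}$ and reduce everything to the $2\times2$ identity $H=S\,R_X(\pi/2)\,S$, which tensors up immediately; this buys a shorter verification with no Hamming-weight bookkeeping (one checks $i^{h(x)}i^{h(y)}(-i)^{d_H(x,y)}=i^{2x\cdot y}=(-1)^{x\cdot y}$ in one line), at the cost of making the ``uniform mixing'' phenomenon less visible. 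One quantitative slip: if you realize each $S^{\otimes n}$ layer as $n$ separate self-loop graphs of duration $3\pi/2$, the total time is $2n(3\pi/2)+n\pi/4=13n\pi/4$, exactly equal to the sequential cost, so your claim of strictly less evolution time does not follow from that implementation. To get the speedup you must compress each diagonal layer into a constant number of graphs by grouping vertices according to the needed phase $i^{h(v)}\in\{1,i,-1,-i\}$ (the paper uses two graphs per layer, of durations $\pi$ and $\pi/2$), after which the count of $5$ graphs and time $3\pi+n\pi/4$ matches the paper.
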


\begin{proof}
	We begin with how $H^{\otimes n}$ acts on the state $c_0 \ket{0} + \dots + c_{N-1} \ket{N-1}$, where $N = 2^n$. First, $H^{\otimes n}$ is an $N \times N$ matrix, and its elements are
	\[ (H^{\otimes n})_{i,j} = \frac{(-1)^{i \cdot j}}{2^{n/2}}, \]
	where we label each row and column from $0$ to $N-1$, so the top left entry of the matrix corresponds to $(H^{\otimes n})_{0,0}$. Furthermore, $i \cdot j = \sum_m i_m j_m$ denotes the bitwise dot product of $i$ and $j$, where $i_m$ and $j_m$ are the $m^{th}$ digits in the binary representation of $i$ and $j$, respectively. The Hadamard gates act on the initial state by
	\begin{align}
		&H^{\otimes n} \left( c_0 \ket{0} + \dots + c_{N-1} \ket{N-1} \right) \nonumber \\
		&\quad= \frac{c_0}{2^{n/2}} \left[ (-1)^{0 \cdot 0} \ket{0} + \dots + (-1)^{(N-1) \cdot 0} \ket{N-1} \right] \nonumber \\
		&\quad\quad+ \frac{c_1}{2^{n/2}} \left[ (-1)^{0 \cdot 1} \ket{0} + \dots + (-1)^{(N-1) \cdot 1} \ket{N-1} \right] + \dots \nonumber \\
		&\quad\quad+ \frac{c_{N-1}}{2^{n/2}} \Big[ (-1)^{0 \cdot (N-1)} \ket{0} + \dots \nonumber \\
		&\quad\quad\quad\quad\quad\quad+ (-1)^{(N-1) \cdot (N-1)} \ket{N-1} \Big] \nonumber \\
		&\quad= \frac{1}{2^{n/2}} \bigg\{ \left[ c_0 (-1)^{0 \cdot 0} + \dots + c_{N-1} (-1)^{0 \cdot (N-1)} \right] \ket{0} \label{eq:hypercube_hadamards} \\
		&\quad\quad+ \left[ c_0 (-1)^{(1) \cdot 0} + \dots + c_{N-1} (-1)^{1 \cdot (N-1)} \right] \ket{1} + \dots \nonumber \\
		&\quad\quad+ \Big[ c_0 (-1)^{(N-1) \cdot 0} + \dots \nonumber \\
		&\quad\quad\quad\quad+ c_{N-1} (-1)^{(N-1) \cdot (N-1)} \Big] \ket{N-1} \bigg\}. \nonumber
	\end{align}

	Next, let us explore how to implement this using a quantum walk on the $n$-dimensional hypercube. It begins with seminal work by Moore and Russell, who explored uniform mixing on the $n$-dimensional hypercube with Hamiltonian $H = A/n$ \cite{moore2002quantum}. Since $\norm{A} = n$ for the hypercube, this is $H = A/\norm{A}$, which is the same Hamiltonian that we are using. In their Appendix C, they showed that if the walker is initially at vertex $\ket{00\dots0}$, i.e., $\ket{\psi(0)} = \ket{0}^{\otimes n}$, then after walking on the hypercube for time $t$, the state of the system is
	\[ \ket{\psi(t)} = \left[ \cos \left( \frac{t}{n} \right) \ket{0} + i \sin \left( \frac{t}{n} \right) \ket{1} \right]^{\otimes n}. \]
	Continuing their Appendix C, since a vertex with Hamming weight $x$ has $(n-x)$ zeros and $x$ ones, the amplitude at vertex $x$ is
	\[ \braket{x}{\psi(t)} = i^x \sin^x \left( \frac{t}{n} \right) \cos^{n-x} \left( \frac{t}{n} \right). \]
	Moore and Russell noted when $t = kn\pi/4$ for odd $k$, we get the uniform distribution, i.e., the probability at each vertex $x$ is $|\braket{x}{\psi(t)}|^2 = 1/2^n$ for all $x$.
	
	For our purposes, we want to implement gates in the shortest time possible, so we take $k = 1$. That is, at time $t = n\pi/4$, the amplitude at vertex $x$ is
	\[ \braket{x}{\psi(t)} = \frac{i^x}{2^{n/2}}. \]
	As an extension of Moore and Russel's result, we can change the starting state so that instead of starting at vertex $\ket{0}^{\otimes n}$, we start at some vertex $\ket{v}$. Then, at time $t = n\pi/4$, the phase of the amplitude at each vertex depends on its Hamming distance from vertex $\ket{v}$. That is, the amplitude at vertex $x$ is
	\[ \braket{x}{\psi(t)} = \frac{i^{d_H(x,v)}}{2^{n/2}}, \]
	where $d_H(x,v)$ denotes the Hamming distance between vertices $x$ and $v$. Generalizing a step further, say the initial state is a superposition over the vertices, $\ket{\psi(0)} = c_0 \ket{0} + \dots + c_{N-1}\ket{N-1}$. Then, at time $t = n\pi/4$, a fraction of $c_0$ (specifically $1/2^{n/2}$) ends up at all the other vertices in the hypercube, with a phase depending on the vertex's Hamming distance from vertex $0 \pmod 4$. Similarly, from vertex 1, a fraction of $c_1$ would jump to all the other vertices in the hypercube with a phase depending on the Hamming distance from vertex $1 \pmod 4$. Adding the contribution from each vertex, the quantum walk on the hypercube transforms the initial state by
	\begin{align*}
		&e^{-iA\pi/4} \left( c_0 \ket{0} + \dots + c_{N-1} \ket{N-1} \right) \\
		&\quad= \frac{c_0}{2^{n/2}} \left[ i^{d_H(0,0)} \ket{0} + \dots + i^{d_H(N-1,0)} \ket{N-1} \right] \\
		&\quad\quad+ \frac{c_1}{2^{n/2}} \left[ i^{d_H(0,1)} \ket{0} + \dots + i^{d_H(N-1,1)} \ket{N-1} \right] + \dots \\
		&\quad\quad+ \frac{c_{N-1}}{2^{n/2}} \left[ i^{d_H(0,N-1)} \ket{0} + \dots + i^{d_H(N-1,N-1)} \ket{N-1} \right] \\
		&\quad= \frac{1}{2^{n/2}} \bigg\{ \left[ c_0 i^{d_H(0,0)} + \dots + c_{N-1} i^{d_H(0,N-1)} \right] \ket{0} \\
		&\quad\quad+ \left[ c_0 i^{d_H(1,0)} + \dots + c_{N-1} i^{d_H(1,N-1)} \right] \ket{1} + \dots \\
		&\quad\quad+ \left[ c_0 i^{d_H(N-1,0)} + \dots + c_{N-1} i^{d_H(N-1,N-1)} \right] \ket{N-1} \bigg\}.
	\end{align*} 
	This does mix the amplitudes, but it does not quite match the result of the Hadamard matrices in \eqref{eq:hypercube_hadamards}.
	
	To fix the discrepancy, we can adjust the phase of each vertex by evolving by looped singletons before and after walking on the hypercube. Say we start with the initial state $c_0 \ket{0} + \dots + c_{N-1} \ket{N-1}$. If we evolve by looped singletons at each vertex for various amounts of time, we can apply phases $\omega_i$ to each vertex $i$. Then, the state of the quantum walk is
	\[ c_0 \omega_0 \ket{0} + c_1 \omega_1 \ket{1} + \dots + c_{N-1} \omega_{N-1} \ket{N-1}. \]
	Then, if we walk on the hypercube for time $t = n\pi/4$, we get
	\begin{align*}
		&\frac{1}{2^{n/2}} \bigg\{ \left[ c_0 \omega_0 i^{d_H(0,0)} + \dots + c_{N-1} \omega_{N-1} i^{d_H(0,N-1)} \right] \ket{0} \\
		&\quad+ \left[ c_0 \omega_0 i^{d_H(1,0)} + \dots + c_{N-1} \omega_{N-1} i^{d_H(1,N-1)} \right] \ket{1} + \\
		&\quad\dots \\
		&\quad+ \Big[ c_0 \omega_0 i^{d_H(N-1,0)} + \dots \\
		&\quad\quad\quad + c_{N-1} \omega_{N-1} i^{d_H(N-1,N-1)} \Big] \ket{N-1} \bigg\}.
	\end{align*}
	Now, we can evolve by looped singletons again, applying phases $\omega'_i$ to each vertex, resulting in
	\begin{align*}
		&\frac{1}{2^{n/2}} \bigg\{ \omega'_0 \left[ c_0 \omega_0 i^{d_H(0,0)} + \dots + c_{N-1} \omega_{N-1} i^{d_H(0,N-1)} \right] \ket{0} \\
		&\quad+ \omega'_1 \left[ c_0 \omega_0 i^{d_H(1,0)} + \dots + c_{N-1} \omega_{N-1} i^{d_H(1,N-1)} \right] \ket{1} \\
		&\quad+ \dots \\
		&\quad+ \omega'_{N-1} \Big[ c_0 \omega_0 i^{d_H(N-1,0)} + \dots \\
		&\quad\quad\quad\quad\quad + c_{N-1} \omega_{N-1} i^{d_H(N-1,N-1)} \Big] \ket{N-1} \bigg\}.
	\end{align*}
	Comparing this to \eqref{eq:hypercube_hadamards}, the final states of the dynamic quantum walk and the Hadamard gates are the same when
	\begin{equation}
		\label{eq:hypercube_condition}
		\omega'_x \omega_y i^{d_H(x,y)} = (-1)^{x \cdot y}
	\end{equation}
	for all $x,y = 0, 1, \dots, N-1$. To make these equal, we apply the following phase to each vertex $v$ both before and after walking on the hypercube:
	\[ \omega_v = \omega'_v = \begin{cases}
		-1, & h(v) \cong 0 \pmod 4 \\
		-i, & h(v) \cong 1 \pmod 4 \\
		1, & h(v) \cong 2 \pmod 4 \\
		i, & h(v) \cong 3 \pmod 4 \\
	\end{cases} = -e^{-i h(v)\pi/2}, \]
	where $h(v)$ is the Hamming weight of $v$. Plugging into the left hand side of \eqref{eq:hypercube_condition} and also using $i = e^{i\pi/2}$, we get
	\begin{align*}
		\omega'_x \omega_y i^{d_H(x,y)}
			&= e^{-i h(x)\pi/2} e^{-i h(y)\pi/2} e^{i d_H(x,y)\pi/2} \\
			&= e^{i(d_H(x,y) - h(x) - h(y)) \pi/2}.
	\end{align*}
	The exponent can be related to the dot product of $x$ and $y$, which is the number of 1's they have in common. To derive this, recall $h(x)$ is the number of 1's in bitstring $x$, and $h(y)$ is the number of 1's in bitstring $y$. Since $x \cdot y$ is the number of pairs of 1's they have in the same position, bitstring $x$ has $[h(x) - x \cdot y]$ 1's that are not paired with a 1 in $y$, so it is paired with a 0. Similarly, bitstring $y$ has $[h(y) - x \cdot y]$ 1's that are not paired with a 1 in $x$, so it is paired with a 0. Adding these, we get the total number of 1's paired with a 0, which is the number of positions where the bits differ, which is the Hamming distance $d_H(x,y)$:
	\[ [h(x) - x \cdot y] + [h(y) - x \cdot y] = d_H(x,y). \]
	For example, say we have two bytes $x$ and $y$. Say $x$ has three 1's, $y$ has six 1's, and there are two pairs of 1's in the same position, such as
	\begin{align*}
		x &= 11100000 \\
		y &= 11001111.
	\end{align*} 
	Then $x$ has $3-2 = 1$ remaining 1 that must be paired with a 0 in $y$, and $y$ has $6 - 2 = 4$ remaining 1's that must be paired with a 0 in $x$. So, their Hamming distance is is $1 + 4 = 5$. Returning to the general expression, we can rearrange it to get
	\[ d_H(x,y) - h(x) - h(y) = -2 x \cdot y. \]
	Thus,
	\[ e^{i(d_H(x,y) - h(x) - h(y)) \pi/2} = e^{-i x \cdot y \pi} = (-1)^{x \cdot y}, \]
	which is the right-hand side of \eqref{eq:hypercube_condition}. Thus, with phase adjustments of $e^{-ih(v)\pi/2}$ before and after walking on the hypercube, which can be implemented by evolving vertex $v$ as a looped singleton for time $h(v)\pi/2 \pmod{2\pi}$, since the period of a graph with only isolated vertices and singletons is $2\pi$ we can implement parallel Hadamard gates.
	
	Generalizing, we can apply Hadamard gates to a subset of the qubits rather than all the qubits by walking on a smaller-dimensional hypercubes, again with phase adjustmest before and after.\qed
\end{proof}

\begin{example}\label{ex:hh}
	We will look at $H \otimes H$ in this example. It maps
	\begin{align*}
		(H \otimes H) \ket{00} &= \frac{1}{2} \left( \ket{00} + \ket{01} + \ket{10} + \ket{11} \right), \\
		(H \otimes H) \ket{01} &= \frac{1}{2} \left( \ket{00} - \ket{01} + \ket{10} - \ket{11} \right), \\
		(H \otimes H) \ket{10} &= \frac{1}{2} \left( \ket{00} + \ket{01} - \ket{10} - \ket{11} \right), \\
		(H \otimes H) \ket{11} &= \frac{1}{2} \left( \ket{00} - \ket{01} - \ket{10} + \ket{11} \right).
	\end{align*}
	The dynamic quantum walk for $H \otimes I$ followed by $I \otimes H$, using results from \cite{Wong33}, is shown in \fref{fig:fullhh}. The effect of each graph on the initial state is
	\begin{align*}
		&c_0 \ket{00} + c_1 \ket{01} + c_2 \ket{10} + c_3 \ket{11}\\
		&\quad \xrightarrow{G_1} c_0 \ket{00} + c_1 \ket{01} + ic_2\ket{10} + ic_3\ket{11} \\
		&\quad \xrightarrow{G_2} \frac{1}{\sqrt{2}} \big[ (c_0+c_2) \ket{00} + (c_1+c_3) \ket{01} + i(-c_0+c_2) \ket{10} + i(-c_1+c_3) \ket{11} \big] \\
		&\quad \xrightarrow{G_3} \frac{1}{\sqrt{2}} \big[ (c_0+c_2) \ket{00} + (c_1+c_3) \ket{01} + (c_0-c_2) \ket{10} + (c_1-c_3) \ket{11} \big] \\
		&\quad \xrightarrow{G_4} \frac{1}{\sqrt{2}} \big[ (c_0+c_2) \ket{00} + i(c_1+c_3) \ket{01} + (c_0-c_2) \ket{10} + i(c_1-c_3) \ket{11} \big] \\
		&\quad \xrightarrow{G_5} \frac{1}{2} \big[ (c_0+c_1+c_2+c_3) \ket{00} + i(-c_0+c_1-c_2+c_3) \ket{01} \\
		&\quad\quad\quad+ (c_0+c_1-c_2-c_3) \ket{10} + i(-c_0+c_1+c_2-c_3) \ket{11} \big] \\
		&\quad \xrightarrow{G_6} \frac{1}{2} \big[ (c_0+c_1+c_2+c_3) \ket{00} + (c_0-c_1+c_2-c_3) \ket{01} \\
		&\quad\quad\quad+ (c_0+c_1-c_2-c_3) \ket{10} + (c_0-c_1-c_2+c_3) \ket{11} \big].
	\end{align*}
	This has a total evolution time of $13\pi/2 \approx 20.4$.

	\begin{figure}
	\begin{center}
		\subfloat[] {
			\includegraphics{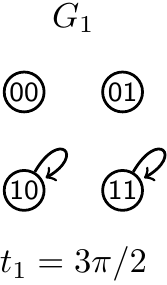} \quad
			\includegraphics{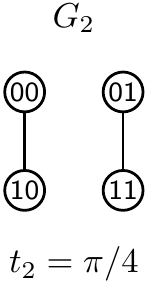} \quad
			\includegraphics{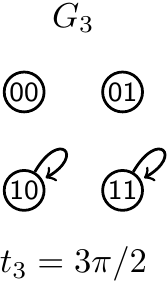} \quad
			\includegraphics{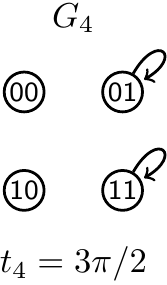} \quad
			\includegraphics{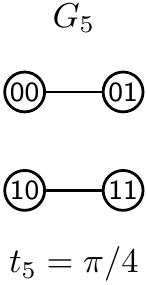} \quad
			\includegraphics{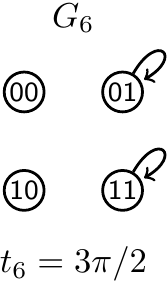}
			\label{fig:fullhh}
		}

		\subfloat[] {
			\includegraphics{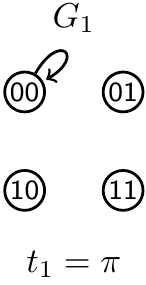} \quad
			\includegraphics{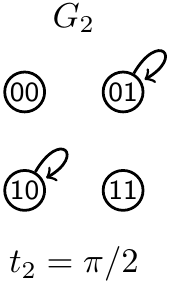} \quad
			\includegraphics{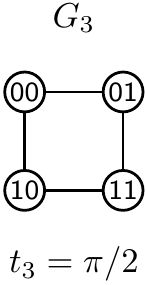} \quad
			\includegraphics{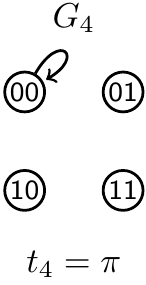} \quad
			\includegraphics{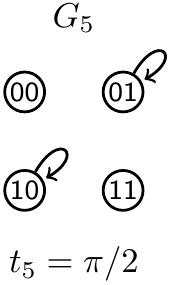}
			\label{fig:simplifiedhh}
		}
		
		\subfloat[] {
			\includegraphics{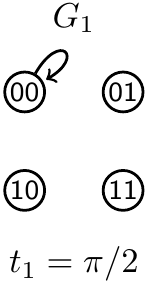} \quad
			\includegraphics{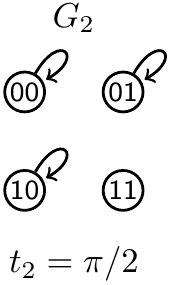} \quad
			\includegraphics{HH_hypercube_3} \quad
			\includegraphics{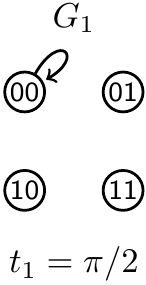} \quad
			\includegraphics{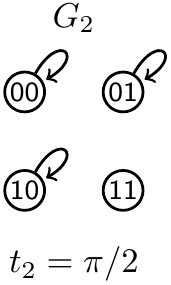}
			\label{fig:finalhh}
		}
		
		\caption{(a) The dynamic quantum walk for $H \otimes I$ followed by $I \otimes H$. (b) The dynamic quantum walk for $H \otimes H$ using the hypercube observation. (c) The dynamic quantum walk further simplified by moving looped singletons.}
	\end{center}
	\end{figure}

	Using the hypercube observation, we can implement $H \otimes H$ using the dynamic graph found in \fref{fig:simplifiedhh}. In $G_1$, we evolve vertex $\ket{00}$ as a looped singleton for time $\pi$ so that it acquires a phase of $-1$. In $G_2$, we evolve vertices $\ket{01}$ and $\ket{10}$ for time $\pi/2$ so that they each acquire a phase of $-i$. In $G_3$, we walk on the 2D hypercube, which is a square. In $G_4$ and $G_5$, we repeat the phases in $G_1$ and $G_2$. To see how the hypercube implementation affects the initial state, 
	\begin{align*}
		&c_0 \ket{00} + c_1 \ket{01} + c_2 \ket{10} + c_3 \ket{11}\\
		&\quad \xrightarrow{G_1} -c_0 \ket{00} + c_1 \ket{01} + c_2\ket{10} + c_3\ket{11} \\
		&\quad \xrightarrow{G_2} -c_0\ket{00} - ic_1 \ket{01} - ic_2\ket{10} + c_3\ket{11}\\
		&\quad \xrightarrow{G_3} \frac{1}{2} \big[ -(c_0+c_1+c_2+c_3) \ket{00}+ i(c_0-c_1+c_2-c_3) \ket{01} \\ 
		&\quad\quad\quad+ i(c_0+c_1-c_2-c_3) \ket{10} + (c_0-c_1-c_2+c_3) \ket{11} \big] \\
		&\quad \xrightarrow{G_4} \frac{1}{2} \big[ (c_0+c_1+c_2+c_3) \ket{00}+ i(c_0-c_1+c_2-c_3) \ket{01} \\ 
		&\quad\quad\quad+ i(c_0+c_1-c_2-c_3) \ket{10} + (c_0-c_1-c_2+c_3) \ket{11} \big] \\
		&\quad \xrightarrow{G_5} \frac{1}{2} \big[ (c_0+c_1+c_2+c_3) \ket{00} + (c_0-c_1+c_2-c_3) \ket{01} \\ 
		&\quad\quad\quad+ (c_0+c_1-c_2-c_3) \ket{10} + (c_0-c_1-c_2+c_3) \ket{11} \big].
	\end{align*}
	This is the same final result as implementing $H \otimes I$ followed by $I \otimes H$. The total time for this implementation is $7\pi/2 \approx 11.0$, and it consists of five graphs. This is 46\% faster than the $26\pi/4 \approx 20.4$ needed to implement $H \otimes I$ followed by $I \otimes H$, and it uses one fewer graph.
	
	Note we can further use Obs.~\ref{obs:singletons} to partially combine $G_1$ and $G_2$, as well as $G_4$ and $G_5$, resulting in \fref{fig:finalhh}. This has a total evolution time of $5\pi/2 \approx 7.9$, which is a 62\% speedup over the original, sequential implementation.
\end{example}

Note the speedup provided by this observation increases as we increase the number of Hadamard gates. Using the sequential implementation from previous works \cite{Wong33}, a Hadamard gate takes 3 graphs and $13\pi/4$ time. Then, $H^{\otimes n}$ takes $3n$ graphs and $13n\pi/4$ time. Using our hypercube approach, $H^{\otimes n}$ only takes 5 graphs. That is, it takes 2 graphs to apply phases, one for time $\pi$ and another for time $\pi/2$. Together, these allow us to apply all possible phases that we need. Then, we have one graph where we walk on the hypercube for time $n\pi/4$.  Then we apply phases again with the two graphs. So the total time is $3+n\pi/4$. For large $n$, this is $n\pi/4$, which is a 92.3\% speedup over the sequential implementation's $13n\pi/4$.


\section{Example: Simplification of the dynamic quantum walk for \fref{fig:sequential}}\label{sec:simplification}

In this section, we return to the motivating example from the introduction, which was whether there is a simpler way to implement the circuit shown in \fref{fig:sequential} using a quantum walk on a dynamic graph. To do this, we begin with the sequential implementation and then simplify it using the observations introduced in this paper. The sequential implementation comes from \cite{Wong33}, and it is shown in \fref{fig:fullcircuit}. $G_1$ through $G_3$ apply $H \otimes I \otimes I$, then $G_4$ and $G_5$ apply $I \otimes X \otimes I$, then $G_6$ through $G_8$ apply $I \otimes I \otimes H$, etc. It contains sixteen graphs, and the total evolution time is $67\pi/4 \approx 52.6$.

\begin{figure*}
\begin{center}
	\subfloat[] {
		\label{fig:fullcircuit}
		\begin{minipage}[b]{1.0\linewidth}\centering
			\includegraphics[scale=.75]{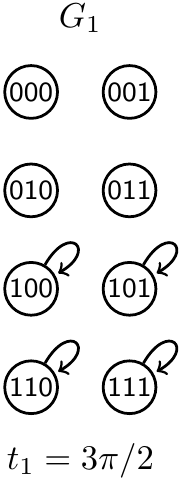} \thinspace
			\includegraphics[scale=.75]{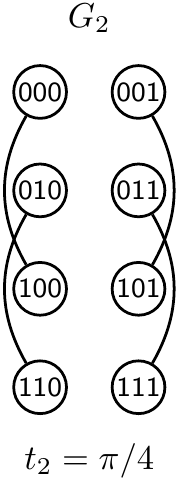} \thinspace
			\includegraphics[scale=.75]{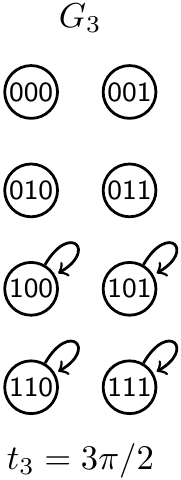} \thinspace
			\includegraphics[scale=.75]{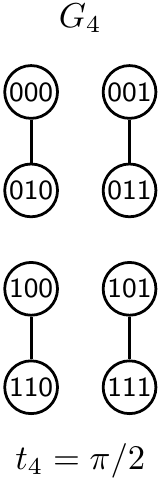} \thinspace
			\includegraphics[scale=.75]{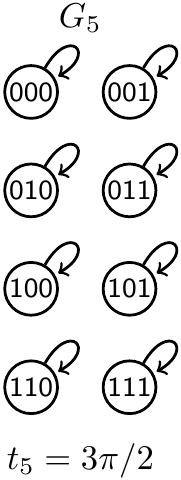} \thinspace
			\includegraphics[scale=.75]{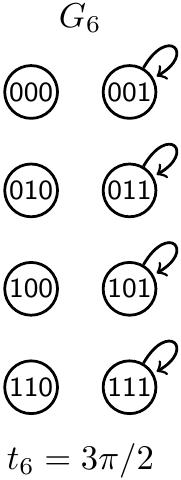} \thinspace
			\includegraphics[scale=.75]{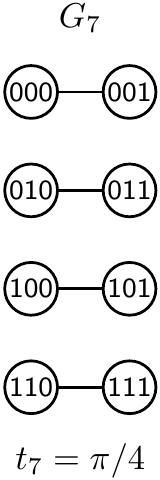} \thinspace
			\includegraphics[scale=.75]{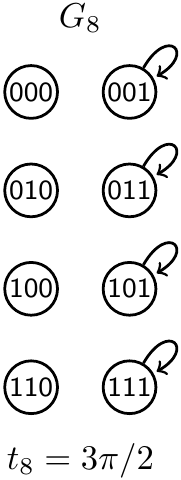} \hfill
			
			\vspace{1em}
			\includegraphics[scale=.75]{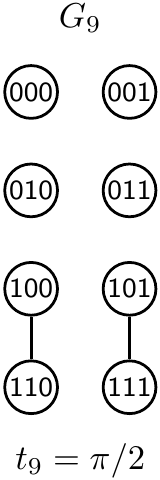} \thinspace
			\includegraphics[scale=.75]{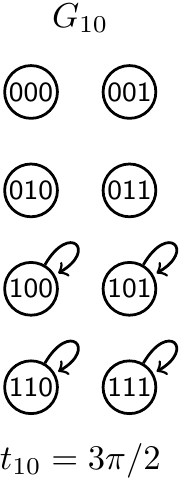} \thinspace
			\includegraphics[scale=.75]{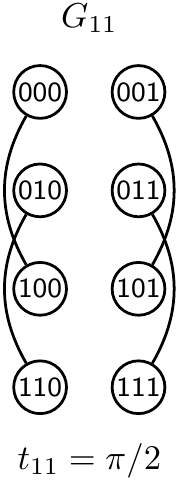} \thinspace
			\includegraphics[scale=.75]{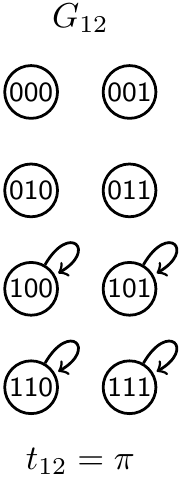} \thinspace
			\includegraphics[scale=.75]{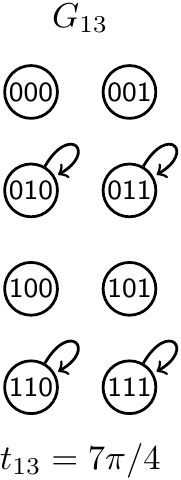} \thinspace
			\includegraphics[scale=.75]{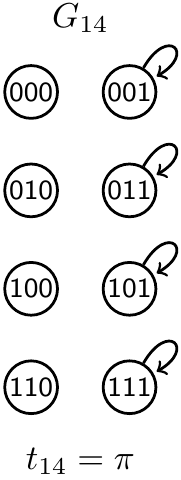} \thinspace
			\includegraphics[scale=.75]{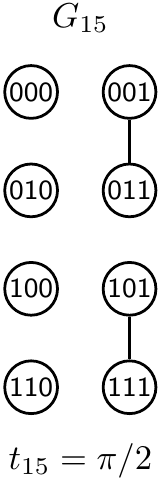} \thinspace
			\includegraphics[scale=.75]{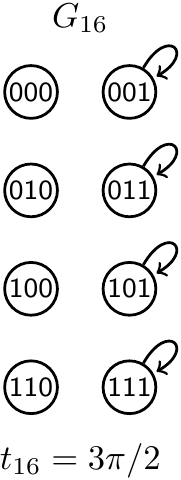} 
		\end{minipage}
	}

	\subfloat[] {
		\label{fig:firstreducedcircuit}
		\begin{minipage}[b]{1.0\linewidth}\centering
			\includegraphics[scale=.75]{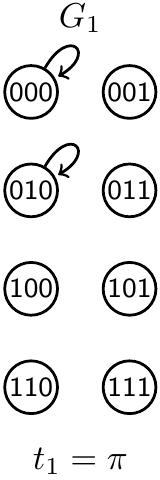} \thinspace
			\includegraphics[scale=.75]{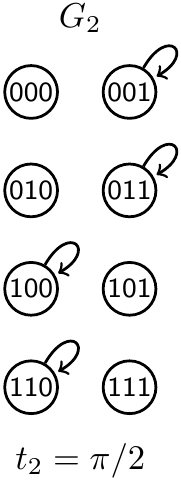} \thinspace
			\includegraphics[scale=.75]{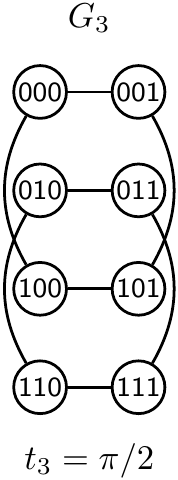} \thinspace
			\includegraphics[scale=.75]{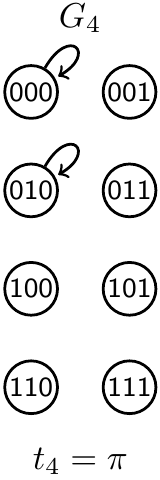} \thinspace
			\includegraphics[scale=.75]{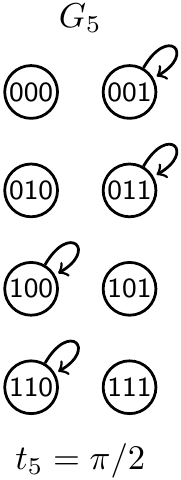} \thinspace
			\includegraphics[scale=.75]{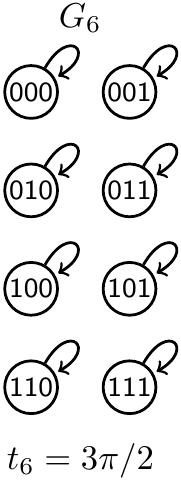} \thinspace
			\includegraphics[scale=.75]{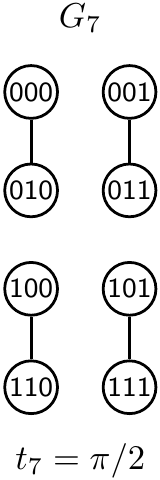} \hfill
			
			\vspace{1em}
			\includegraphics[scale=.75]{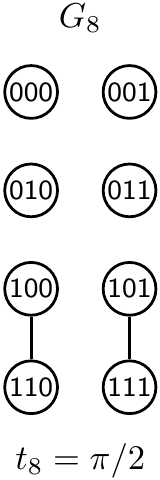} \thinspace
			\includegraphics[scale=.75]{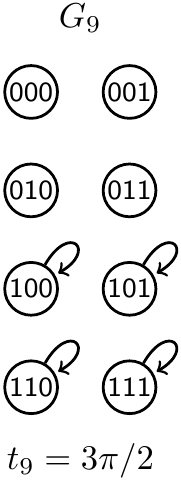} \thinspace
			\includegraphics[scale=.75]{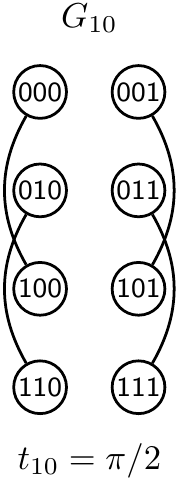} \thinspace
			\includegraphics[scale=.75]{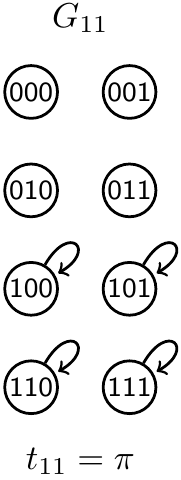} \thinspace
			\includegraphics[scale=.75]{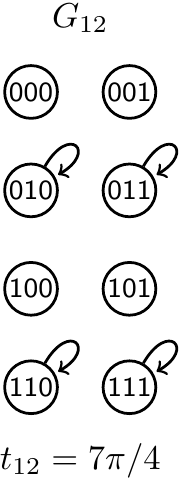} \thinspace
			\includegraphics[scale=.75]{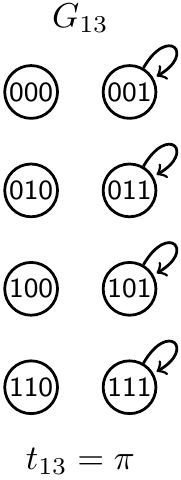} \thinspace
			\includegraphics[scale=.75]{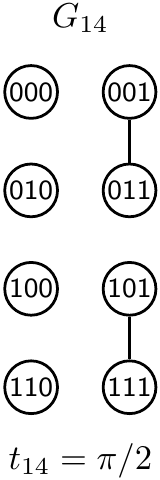} \thinspace
			\includegraphics[scale=.75]{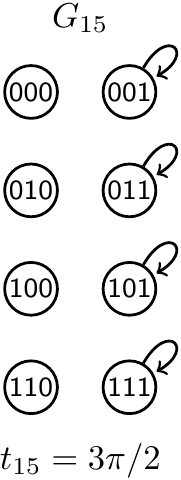} 
		\end{minipage}
	}
	\caption{\label{fig:motivation}(a) The dynamic quantum walk for the circuit pictured in \fref{fig:sequential}. (b) The dynamic graph starting with $H \otimes I \otimes H$ followed by $I \otimes X \otimes I$, then the rest of the operations in \fref{fig:sequential}. Continued on the next page.}
\end{center}
\end{figure*}

Now, let us simplify this dynamic graph using our observations. First, note from \fref{fig:layers} that we can implement the first layer of gates by applying $H \otimes I \otimes H$ followed by $I \otimes X \otimes I$. We can implement $H \otimes I \otimes H$ using the hypercube observation, which is Obs.~\ref{obs:hypercube}, and this will require five graphs. Then, we can apply $I \otimes X \otimes I$, which will take two graphs. Together, these seven graphs replace the first eight graphs of \fref{fig:fullcircuit}, and the result is shown in \fref{fig:firstreducedcircuit}. Let us explore this in more detail. $G_1$ and $G_2$ of \fref{fig:firstreducedcircuit} apply phases in preparation for walking on the hypercube. $G_3$ is a walk on two-dimensional hypercubes, which are cycles of length 4, since the Hadamard gate is applied to two vertices. One $C_4$ connects vertices $\ket{000},\ket{001}, \ket{101}$, and $\ket{100}$, while another $C_4$ connects the other four vertices in a similar cycle. Following, $G_4$ and $G_5$ repeat the phases in $G_1$ and $G_2$, and this finishes $H \otimes I \otimes H$ according to Obs.~\ref{obs:hypercube}. Next, $G_6$ and $G_7$ implement $I \otimes X \otimes I$, and they are the same as $G_5$ and $G_4$ from \fref{fig:fullcircuit}, respectively, so we have swapped their order using Obs.~\ref{obs:commute}. Finally, $G_8$ through $G_{15}$ are the same as $G_9$ through $G_{16}$ from \fref{fig:fullcircuit}.

\begin{figure*}
\begin{center}
	\ContinuedFloat

	\subfloat[] {
		\label{fig:secondreducedcircuit}
		\begin{minipage}[b]{1.0\linewidth}\centering
			\includegraphics[scale=.75]{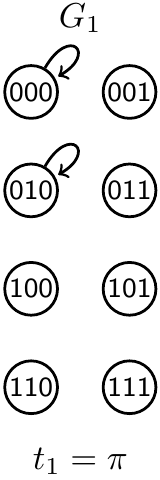} \thinspace
			\includegraphics[scale=.75]{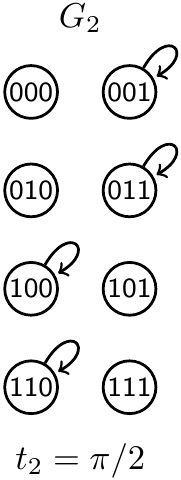} \thinspace
			\includegraphics[scale=.75]{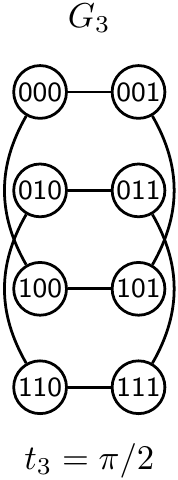} \thinspace
			\includegraphics[scale=.75]{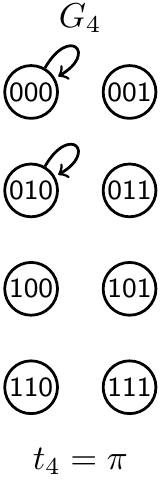} \thinspace
			\includegraphics[scale=.75]{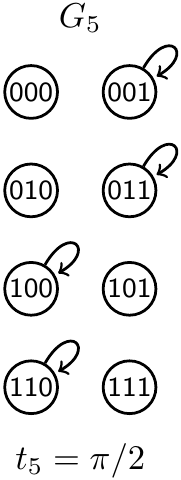} \thinspace
			\includegraphics[scale=.75]{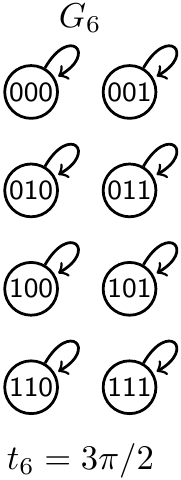} \thinspace
			\includegraphics[scale=.75]{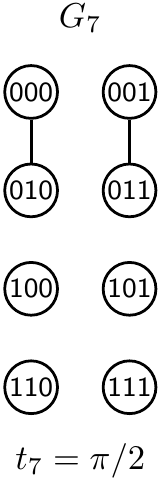} \hfill
			
			\vspace{1em}
			\includegraphics[scale=.75]{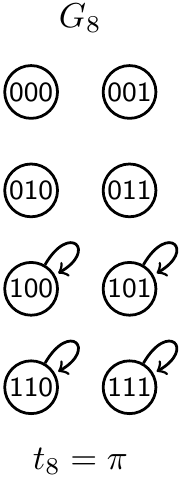} \thinspace
			\includegraphics[scale=.75]{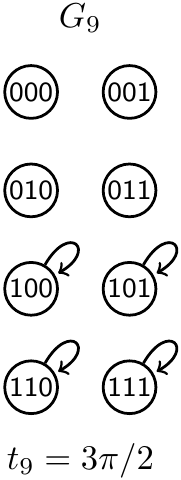} \thinspace
			\includegraphics[scale=.75]{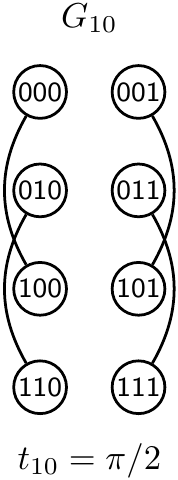} \thinspace
			\includegraphics[scale=.75]{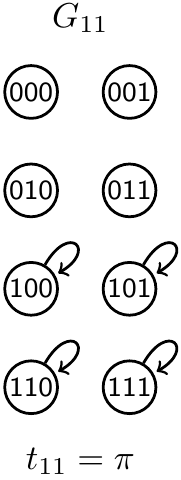} \thinspace
			\includegraphics[scale=.75]{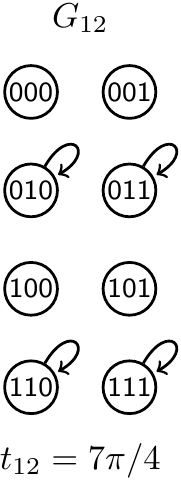} \thinspace
			\includegraphics[scale=.75]{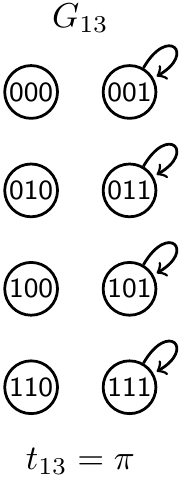} \thinspace
			\includegraphics[scale=.75]{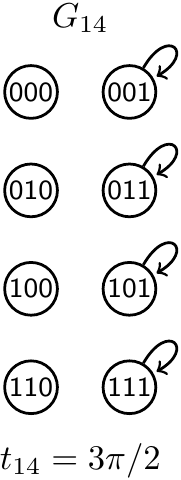} \thinspace
			\includegraphics[scale=.75]{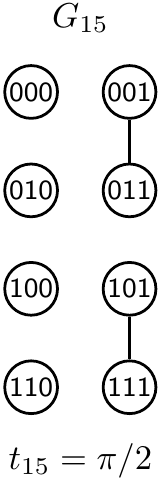} 
		\end{minipage}
	}
	
	\subfloat[] {
		\label{fig:reducedcircuit}
		\begin{minipage}[b]{1.0\linewidth}\centering
			\includegraphics[scale=.75]{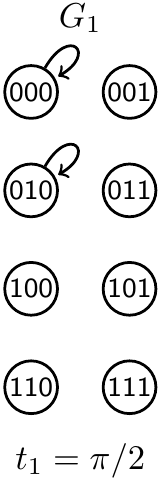} \thinspace
			\includegraphics[scale=.75]{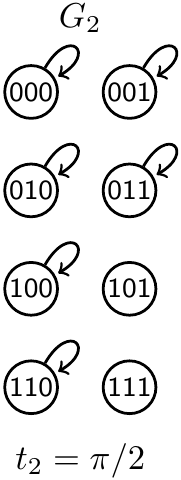} \thinspace
			\includegraphics[scale=.75]{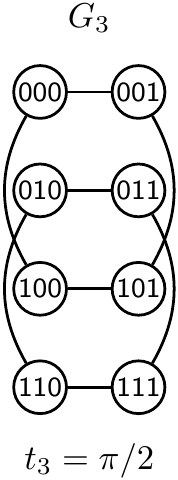} \thinspace
			\includegraphics[scale=.75]{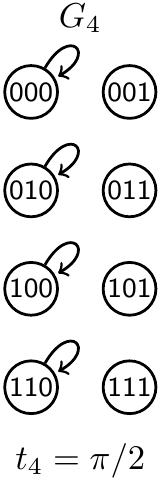} \thinspace
			\includegraphics[scale=.75]{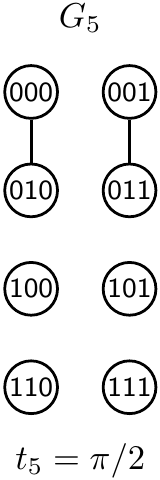} \thinspace
			\includegraphics[scale=.75]{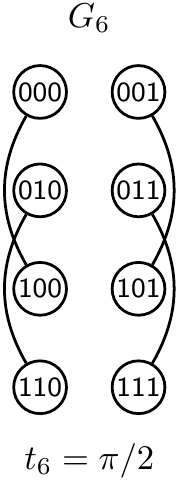} \thinspace
			\includegraphics[scale=.75]{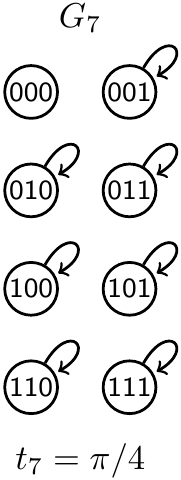} \hfill
			
			\vspace{1em}
			\includegraphics[scale=.75]{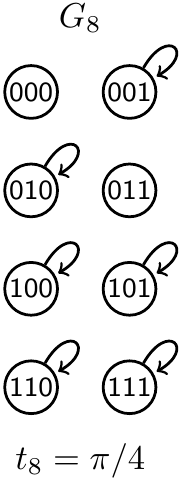} \thinspace
			\includegraphics[scale=.75]{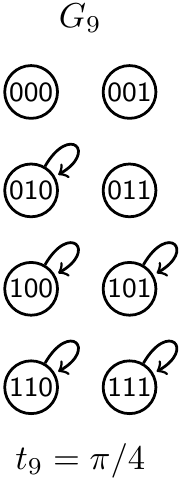} \thinspace
			\includegraphics[scale=.75]{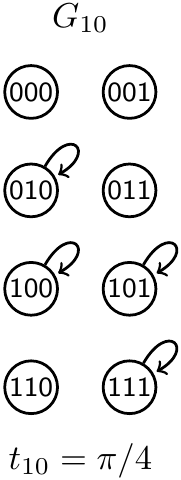} \thinspace
			\includegraphics[scale=.75]{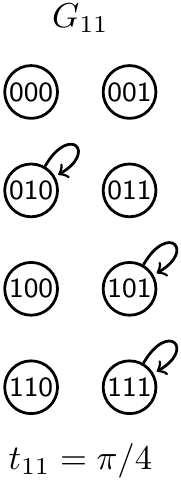} \thinspace
			\includegraphics[scale=.75]{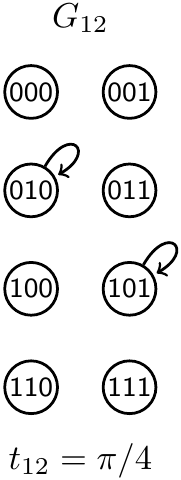} \thinspace
			\includegraphics[scale=.75]{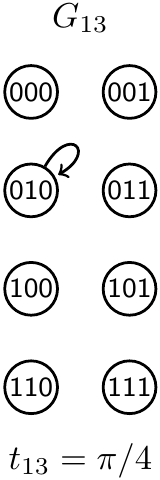} \thinspace
			\includegraphics[scale=.75]{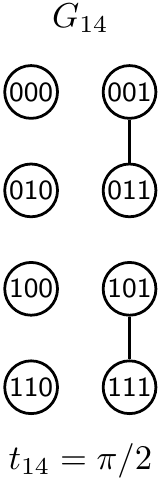} 
		\end{minipage}
	}
	\caption{Continued from the previous page. (c) The dynamic graph after combining identical graphs and swapping commuting graphs. (d) The dynamic graph after combining singletons.}
\end{center}
\end{figure*}

In our next step of simplification, we take the bottom two edges from $G_7$ in Fig.~\ref{fig:firstreducedcircuit}, remove them from $G_7$, and add them to $G_8$ using Obs.~\ref{obs:same}, which increases the time $t_8$ from $\pi/2$ to $\pi$. Since $P_2$ at $t=\pi$ simply applies negative signs, we can instead evolve the vertices as looped singletons for time $t = \pi$, which also applies negative signs. This is shown in \fref{fig:secondreducedcircuit}, where $G_8$ has self-loops on the bottom four vertices. We also swapped $G_{14}$ and $G_{15}$ using Obs.~\ref{obs:commute}.

In order to obtain the final reduced circuit, which is Fig.~\ref{fig:reducedcircuit}, we combine singletons in several graphs. We use Obs.~\ref{obs:singletons} to partially combine the looped singletons from $G_1$ and $G_2$ from \fref{fig:secondreducedcircuit} to get $G_1$ and $G_2$ in \fref{fig:reducedcircuit}. Note we still need two graphs since $t_1 \ne t_2$, however, this reduces the evolution time. Then, note in \fref{fig:secondreducedcircuit} that $G_8$ and $G_9$ commute with $G_7$, so we can swap them by Obs.~\ref{obs:commute}. Then, we can combine these two graphs with $G_4$, $G_5$, and $G_6$ of \fref{fig:secondreducedcircuit} using Obs.~\ref{obs:singletons}, resulting in $G_4$ of \fref{fig:reducedcircuit}. Finally, we combine $G_{12}$, $G_{13}$, and $G_{14}$, and the fastest method of implementing all of these singletons is to allow them to accumulate phase in multiples of $\pi/4$, which is done in graphs $G_7$ through $G_{13}$ in Fig.~\ref{fig:reducedcircuit}. This new implementation takes two fewer graphs and has a total evolution time of $21\pi/4 \approx 16.5$, as opposed to $67\pi/4 \approx 52.6$ for the previous walk, which is roughly a 68.7\% speedup. In Appendix \ref{proofappendix}, we explicitly prove that \fref{fig:fullcircuit} and \fref{fig:reducedcircuit} perform the same computation.


\section{Conclusion}\label{sec:conclusion}

Continuous-time quantum walks on dynamic graphs are universal for quantum computation \cite{HH2019}, so any quantum circuit can be implemented as a continuous-time quantum walk on a dynamic graph. In this paper, we showed how to simplify these dynamic graphs, resulting in dynamic graphs with either fewer graphs in the sequence, a shorter overall evolution time, or both. These simplifications were based on sequential graphs that commute, sequential graphs that are the same, sequential perfect state transfers, sequential graphs that are complementary subgraphs, graphs containing singleton vertices, and uniform mixing on the hypercube. We also showed that in the previous formulation of dynamic quantum walks, setting the Hamiltonian $H = A$ may not be the best way to measure time and thus used the convention that $H = A/\norm{A}$.

Regarding the questions raised in the introduction, we answered positively. There are ways to apply single-qubit gates in parallel, to various degrees, using quantum walks on dynamic graphs, and more broadly, there are properties that allow quantum walks on dynamic graphs to be simplified in certain ways.

Further research includes determining if there are more properties that can be used to simplify dynamic quantum walks. In fact, we believe the conditions under which Obs.~\ref{obs:comp} can be relaxed in some cases. For example, let us consider $X \otimes X$. Note that in the implementation of $X \otimes I$, the edges used connect $\ket{00}$ to $\ket{01}$ and $\ket{10}$ to $\ket{11}$. The edges used in $I \otimes X$ connect $\ket{00}$ to $\ket{10}$ and $\ket{01}$ to $\ket{11}$. Instead of using these graphs in sequence, we can combine them into one $C_4$ graph and propagate it for $\pi$ followed by singletons to change the phase. This case is slightly different from the perfect state transfer observation since the edges in $I \otimes X$ are incident to vertices that are incident to edges in $X \otimes I$.


\begin{acknowledgements}
	R.H.~was supported by DARPA ONISQ program under award W911NF-20-2-0051. The authors thank the organizers of the ``Quantum Information on Graphs'' session of the 2019 Canadian Mathematical Society Winter Meeting, where their collaboration on this research was initiated.
\end{acknowledgements}




\section*{Declarations}

\noindent {\small \textbf{Conflict of interest} \enspace R.H.~ has no competing interests to declare that are relevant to the content of this article. T.W.~is on the Editorial Board of the journal.}

\noindent {\small \textbf{Data and code} \enspace Data sharing not applicable to this article as no datasets were generated or analysed during the current study.}

\appendix

\section{Proof of Equivalent Dynamic Quantum Walks for \fref{fig:motivation}}\label{proofappendix}

Note that the sequence in \fref{fig:fullcircuit} acts on the initial state $\ket{\psi(0)} = c_0 \ket{000} + \dots + c_7 \ket{111}$ via
\begin{align*}
	&c_0 \ket{000} + c_1 \ket{001} + c_2 \ket{010} + c_3 \ket{011} + c_4 \ket{100} + c_5 \ket{101} + c_6 \ket{110} + c_7 \ket{111} \\
	&\quad \xrightarrow{G_1} c_0 \ket{000} + c_1 \ket{001} + c_2\ket{010} + c_3\ket{011} + i \big[c_4 \ket{100} +c_5 \ket{101} + c_6 \ket{110} + c_7 \ket{111}\big]\\
	&\quad \xrightarrow{G_2} 1/\sqrt{2}\big[(c_0+c_4) \ket{000} + (c_1+c_5) \ket{001} + (c_2+c_6) \ket{010} + (c_3+c_7) \ket{011} \\
	&\quad\quad\quad + i \big((c_4-c_0) \ket{100} +(c_5-c_1) \ket{101} + (c_6-c_2) \ket{110} + (c_7-c_3) \ket{111}\big)\big]\\
	&\quad \xrightarrow{G_3} 1/\sqrt{2}\big[(c_0+c_4) \ket{000} + (c_1+c_5) \ket{001} + (c_2+c_6) \ket{010} + (c_3+c_7) \ket{011} \\
	&\quad\quad\quad + (c_0-c_4) \ket{100} +(c_1-c_5) \ket{101} + (c_2-c_6) \ket{110} + (c_3-c_7) \ket{111}\big]\\
	&\quad \xrightarrow{G_4} i/\sqrt{2}\big[-\big((c_2+c_6) \ket{000} + (c_3+c_7) \ket{001} + (c_0+c_4) \ket{010} + (c_1+c_5) \ket{011}\big) \\
	&\quad\quad\quad + (-c_2+c_6) \ket{100} + (-c_3+c_7) \ket{101} + (-c_0+c_4) \ket{110} + (-c_1+c_5) \ket{111}\big]\\
	&\quad \xrightarrow{G_5} 1/\sqrt{2}\big[(c_2+c_6) \ket{000} + (c_3+c_7) \ket{001} + (c_0+c_4) \ket{010} + (c_1+c_5) \ket{011} \\
	&\quad\quad\quad + (c_2-c_6) \ket{100} +(c_3-c_7) \ket{101} + (c_0-c_4) \ket{110} + (c_1-c_5) \ket{111}\big]\\
	&\quad \xrightarrow{G_6} 1/\sqrt{2}\big[(c_2+c_6) \ket{000} + i(c_3+c_7) \ket{001} + (c_0+c_4) \ket{010} + i(c_1+c_5) \ket{011} \\
	&\quad\quad\quad + (c_2-c_6) \ket{100} +i(c_3-c_7) \ket{101} + (c_0-c_4) \ket{110} + i(c_1-c_5) \ket{111}\big]\\
	&\quad \xrightarrow{G_7} 1/2\big[(c_2+c_3+c_6+c_7) \ket{000} + i(-c_2+c_3-c_6+c_7) \ket{001} \\
	&\quad\quad\quad + (c_0+c_1+c_4+c_5) \ket{010} + i(-c_0+c_1-c_4+c_5) \ket{011} \\
	&\quad\quad\quad + (c_2+c_3-c_6-c_7) \ket{100} +i(-c_2+c_3+c_6-c_7) \ket{101} \\
	&\quad\quad\quad + (c_0+c_1-c_4-c_5) \ket{110} + i(-c_0+c_1+c_4-c_5) \ket{111}\big]\\
	&\quad \xrightarrow{G_8} 1/2\big[(c_2+c_3+c_6+c_7) \ket{000} + (c_2-c_3+c_6-c_7) \ket{001} \\
	&\quad\quad\quad + (c_0+c_1+c_4+c_5) \ket{010} + (c_0-c_1+c_4-c_5) \ket{011} \\
	&\quad\quad\quad + (c_2+c_3-c_6-c_7) \ket{100} +(c_2-c_3-c_6+c_7) \ket{101} \\
	&\quad\quad\quad + (c_0+c_1-c_4-c_5) \ket{110} + (c_0-c_1-c_4+c_5) \ket{111}\big]\\
	&\quad \xrightarrow{G_9} 1/2\big[(c_2+c_3+c_6+c_7) \ket{000} + (c_2-c_3+c_6-c_7) \ket{001} \\
	&\quad\quad\quad + (c_0+c_1+c_4+c_5) \ket{010} + (c_0-c_1+c_4-c_5) \ket{011} \\
	&\quad\quad\quad - i\big((c_0+c_1-c_4-c_5) \ket{100} +(c_0-c_1-c_4+c_5) \ket{101} \\
	&\quad\quad\quad + (c_2+c_3-c_6-c_7) \ket{110} + (c_2-c_3-c_6+c_7) \ket{111}\big)\big]\\
	&\quad \xrightarrow{G_{10}} 1/2\big[(c_2+c_3+c_6+c_7) \ket{000} + (c_2-c_3+c_6-c_7) \ket{001} \\
	&\quad\quad\quad + (c_0+c_1+c_4+c_5) \ket{010} + (c_0-c_1+c_4-c_5) \ket{011} \\
	&\quad\quad\quad + (c_0+c_1-c_4-c_5) \ket{100} +(c_0-c_1-c_4+c_5) \ket{101} \\
	&\quad\quad\quad + (c_2+c_3-c_6-c_7) \ket{110} + (c_2-c_3-c_6+c_7) \ket{111}\big]\\
	&\quad \xrightarrow{G_{11}} -i/2\big[(c_0+c_1-c_4-c_5) \ket{000} + (c_0-c_1-c_4+c_5) \ket{001} \\
	&\quad\quad\quad + (c_2+c_3-c_6-c_7) \ket{010} + (c_2-c_3-c_6+c_7) \ket{011} \\
	&\quad\quad\quad + (c_2+c_3+c_6+c_7) \ket{100} +(c_2-c_3+c_6-c_7) \ket{101} \\
	&\quad\quad\quad + (c_0+c_1+c_4+c_5) \ket{110} + (c_0-c_1+c_4-c_5) \ket{111}\big]\\
	&\quad \xrightarrow{G_{12}} -i/2\big[(c_0+c_1-c_4-c_5) \ket{000} + (c_0-c_1-c_4+c_5) \ket{001} \\
	&\quad\quad\quad + (c_2+c_3-c_6-c_7) \ket{010} + (c_2-c_3-c_6+c_7) \ket{011} \\
	&\quad\quad\quad - (c_2+c_3+c_6+c_7) \ket{100} -(c_2-c_3+c_6-c_7) \ket{101} \\
	&\quad\quad\quad - (c_0+c_1+c_4+c_5) \ket{110} - (c_0-c_1+c_4-c_5) \ket{111}\big]\\
	&\quad \xrightarrow{G_{13}} -i/2\big[(c_0+c_1-c_4-c_5) \ket{000} + (c_0-c_1-c_4+c_5) \ket{001} \\
	&\quad\quad\quad + e^{i\pi/4}(c_2+c_3-c_6-c_7) \ket{010} + e^{i\pi/4}(c_2-c_3-c_6+c_7) \ket{011} \\
	&\quad\quad\quad - (c_2+c_3+c_6+c_7) \ket{100} -(c_2-c_3+c_6-c_7) \ket{101} \\
	&\quad\quad\quad - e^{i\pi/4}(c_0+c_1+c_4+c_5) \ket{110} - e^{i\pi/4}(c_0-c_1+c_4-c_5) \ket{111}\big]\\
	&\quad \xrightarrow{G_{14}} -i/2\big[(c_0+c_1-c_4-c_5) \ket{000} - (c_0-c_1-c_4+c_5) \ket{001} \\
	&\quad\quad\quad + e^{i\pi/4}(c_2+c_3-c_6-c_7) \ket{010} - e^{i\pi/4}(c_2-c_3-c_6+c_7) \ket{011} \\
	&\quad\quad\quad - (c_2+c_3+c_6+c_7) \ket{100} -(c_2-c_3+c_6-c_7) \ket{101} \\
	&\quad\quad\quad - e^{i\pi/4}(c_0+c_1+c_4+c_5) \ket{110} + e^{i\pi/4}(c_0-c_1+c_4-c_5) \ket{111}\big]\\
	&\quad \xrightarrow{G_{15}} 1/2\big[i(-c_0-c_1+c_4+c_5) \ket{000} + e^{i\pi/4}(c_2-c_3-c_6+c_7) \ket{001} \\
	&\quad\quad\quad - i e^{i\pi/4}(c_2+c_3-c_6-c_7) \ket{010} + (c_0-c_1-c_4+c_5) \ket{011} \\
	&\quad\quad\quad - i(c_2+c_3+c_6+c_7) \ket{100} + e^{i\pi/4}(-c_0+c_1-c_4+c_5) \ket{101} \\
	&\quad\quad\quad + i e^{i\pi/4}(c_0+c_1+c_4+c_5) \ket{110} -(c_2-c_3+c_6-c_7) \ket{111}\big]\\
	&\quad \xrightarrow{G_{16}} i/2\big[(-c_0-c_1+c_4+c_5) \ket{000} + e^{i\pi/4}(c_2-c_3-c_6+c_7) \ket{001} \\
	&\quad\quad\quad - e^{i\pi/4}(c_2+c_3-c_6-c_7) \ket{010} + (c_0-c_1-c_4+c_5) \ket{011} \\
	&\quad\quad\quad - (c_2+c_3+c_6+c_7) \ket{100} +e^{i\pi/4}(-c_0+c_1-c_4+c_5) \ket{101} \\
	&\quad\quad\quad + e^{i\pi/4}(c_0+c_1+c_4+c_5) \ket{110} -(c_2-c_3+c_6-c_7) \ket{111}\big].
\end{align*}
The dynamic graph in \fref{fig:reducedcircuit} acts on the initial state $\ket{\psi(0)} = c_0 \ket{000} + \dots + c_7 \ket{111}$ via 
\begin{align*}
	&c_0 \ket{000} + c_1 \ket{001} + c_2 \ket{010} + c_3 \ket{011} + c_4 \ket{100} + c_5 \ket{101} + c_6 \ket{110} + c_7 \ket{111}\\
	&\quad \xrightarrow{G_1} -ic_0 \ket{000} + c_1 \ket{001} - ic_2\ket{010} + c_3\ket{011} + c_4 \ket{100} + c_5 \ket{101} + c_6 \ket{110} + c_7 \ket{111}\\
	&\quad \xrightarrow{G_2} -c_0 \ket{000} -ic_1 \ket{001} -c_2\ket{010} -ic_3\ket{011} - ic_4 \ket{100} +c_5 \ket{101} - ic_6 \ket{110} + c_7 \ket{111}\\
	&\quad \xrightarrow{G_3} 1/2\big[-(c_0+c_1+c_4+c_5) \ket{000} + i(c_0-c_1+c_4-c_5) \ket{001} \\
	&\quad\quad\quad - (c_2+c_3+c_6+c_7) \ket{010} + i(c_2-c_3+c_6-c_7) \ket{011} \\
	&\quad\quad\quad + i(c_0+c_1-c_4-c_5) \ket{100} +(c_0-c_1-c_4+c_5) \ket{101} \\
	&\quad\quad\quad + i(c_2+c_3-c_6-c_7) \ket{110} + (c_2-c_3-c_6+c_7)\ket{111}\big]\\
	&\quad \xrightarrow{G_4} 1/2\big[i(c_0+c_1+c_4+c_5) \ket{000} + i(c_0-c_1+c_4-c_5) \ket{001} \\
	&\quad\quad\quad + i(c_2+c_3+c_6+c_7) \ket{010} + i(c_2-c_3+c_6-c_7) \ket{011} \\
	&\quad\quad\quad + (c_0+c_1-c_4-c_5) \ket{100} +(c_0-c_1-c_4+c_5) \ket{101} \\
	&\quad\quad\quad + (c_2+c_3-c_6-c_7) \ket{110} + (c_2-c_3-c_6+c_7)\ket{111}\big]\\
	&\quad \xrightarrow{G_5} 1/2\big[(c_2+c_3+c_6+c_7) \ket{000} + (c_2-c_3+c_6-c_7) \ket{001} \\
	&\quad\quad\quad + (c_0+c_1+c_4+c_5) \ket{010} + (c_0-c_1+c_4-c_5) \ket{011} \\
	&\quad\quad\quad + (c_0+c_1-c_4-c_5) \ket{100} +(c_0-c_1-c_4+c_5) \ket{101} \\
	&\quad\quad\quad + (c_2+c_3-c_6-c_7) \ket{110} + (c_2-c_3-c_6+c_7)\ket{111}\big]\\
	&\quad \xrightarrow{G_6} i/2\big[(-c_0-c_1+c_4+c_5) \ket{000} + (-c_0+c_1+c_4-c_5) \ket{001} \\
	&\quad\quad\quad + (-c_2-c_3+c_6+c_7) \ket{010} + (-c_2+c_3+c_6-c_7) \ket{011} \\
	&\quad\quad\quad - (c_2+c_3+c_6+c_7) \ket{100} +(-c_2+c_3-c_6+c_7) \ket{101} \\
	&\quad\quad\quad - (c_0+c_1+c_4+c_5) \ket{110} + (-c_0+c_1-c_4+c_5)\ket{111}\big]\\
	&\quad \xrightarrow{G_7} i/2\big\{(-c_0-c_1+c_4+c_5) \ket{000} + e^{-i\pi/4}\big[ (-c_0+c_1+c_4-c_5) \ket{001} \\
	&\quad\quad\quad + (-c_2-c_3+c_6+c_7) \ket{010} + (-c_2+c_3+c_6-c_7) \ket{011} \\
	&\quad\quad\quad - (c_2+c_3+c_6+c_7) \ket{100} + (-c_2+c_3-c_6+c_7) \ket{101} \\
	&\quad\quad\quad - (c_0+c_1+c_4+c_5) \ket{110} + (-c_0+c_1-c_4+c_5)\ket{111} \big]\big\}\\
	&\quad \xrightarrow{G_8} 1/2\big[i(-c_0-c_1+c_4+c_5) \ket{000} + (-c_0+c_1+c_4-c_5) \ket{001} \\ 
	&\quad\quad\quad + (-c_2-c_3+c_6+c_7) \ket{010} + i e^{-i\pi/4}(-c_2+c_3+c_6-c_7) \ket{011} \\
	&\quad\quad\quad - (c_2+c_3+c_6+c_7) \ket{100} +(-c_2+c_3-c_6+c_7) \ket{101} \\
	&\quad\quad\quad - (c_0+c_1+c_4+c_5) \ket{110} + (-c_0+c_1-c_4+c_5)\ket{111}\big]\\
	&\quad \xrightarrow{G_9} 1/2\big\{i(-c_0-c_1+c_4+c_5) \ket{000} + (-c_0+c_1+c_4-c_5) \ket{001} \\
	&\quad\quad\quad + e^{-i\pi/4} \big[ (-c_2-c_3+c_6+c_7) \ket{010} + i(-c_2+c_3+c_6-c_7) \ket{011} \\
	&\quad\quad\quad - (c_2+c_3+c_6+c_7) \ket{100} +(-c_2+c_3-c_6+c_7) \ket{101}\\
	&\quad\quad\quad - (c_0+c_1+c_4+c_5) \ket{110} + (-c_0+c_1-c_4+c_5)\ket{111}\big]\big\}\\
	&\quad \xrightarrow{G_{10}} 1/2\big[i(-c_0-c_1+c_4+c_5) \ket{000} + (-c_0+c_1+c_4-c_5) \ket{001} \\
	&\quad\quad\quad + i(-c_2-c_3+c_6+c_7) \ket{010} + ie^{-i\pi/4}(-c_2+c_3+c_6-c_7) \ket{011} \\
	&\quad\quad\quad - i(c_2+c_3+c_6+c_7) \ket{100} +i(-c_2+c_3-c_6+c_7) \ket{101} \\
	&\quad\quad\quad - e^{-i\pi/4}(c_0+c_1+c_4+c_5) \ket{110} + i(-c_0+c_1-c_4+c_5)\ket{111}\big]\\
	&\quad \xrightarrow{G_{11}} 1/2\big[i(-c_0-c_1+c_4+c_5) \ket{000} + (-c_0+c_1+c_4-c_5) \ket{001}\\
	&\quad\quad\quad + ie^{-i\pi/4}(c_2+c_3-c_6-c_7) \ket{010} + ie^{-i\pi/4}(-c_2+c_3+c_6-c_7) \ket{011} \\
	&\quad\quad\quad + i(c_2+c_3+c_6+c_7) \ket{100} +ie^{-i\pi/4}(c_2-c_3+c_6-c_7) \ket{101} \\
	&\quad\quad\quad - e^{-i\pi/4}(c_0+c_1+c_4+c_5) \ket{110} + ie^{-i\pi/4}(c_0-c_1+c_4-c_5)\ket{111}\big]\\
	&\quad \xrightarrow{G_{12}} 1/2\big[i(-c_0-c_1+c_4+c_5) \ket{000} + (-c_0+c_1+c_4-c_5) \ket{001} \\
	&\quad\quad\quad + (c_2+c_3-c_6-c_7) \ket{010} + ie^{-i\pi/4}(-c_2+c_3+c_6-c_7) \ket{011} \\
	&\quad\quad\quad + i(c_2+c_3+c_6+c_7) \ket{100} +(c_2-c_3+c_6-c_7) \ket{101} \\
	&\quad\quad\quad - e^{-i\pi/4}(c_0+c_1+c_4+c_5) \ket{110} + ie^{-i\pi/4}(c_0-c_1+c_4-c_5)\ket{111}\big]\\
	&\quad \xrightarrow{G_{13}} 1/2\big[i(-c_0-c_1+c_4+c_5) \ket{000} + (-c_0+c_1+c_4-c_5) \ket{001} \\
	&\quad\quad\quad + e^{-i\pi/4}(c_2+c_3-c_6-c_7) \ket{010} + ie^{-i\pi/4}(-c_2+c_3+c_6-c_7) \ket{011} \\
	&\quad\quad\quad + i(c_2+c_3+c_6+c_7) \ket{100} +(c_2-c_3+c_6-c_7) \ket{101} \\
	&\quad\quad\quad - e^{-i\pi/4}(c_0+c_1+c_4+c_5) \ket{110} + ie^{-i\pi/4}(c_0-c_1+c_4-c_5)\ket{111}\big]\\
	&\quad \xrightarrow{G_{14}} 1/2\big[i(-c_0-c_1+c_4+c_5) \ket{000} + e^{-i\pi/4}(-c_2+c_3+c_6-c_7) \ket{001} \\
	&\quad\quad\quad + e^{-i\pi/4}(c_2+c_3-c_6-c_7) \ket{010} + i(c_0-c_1-c_4+c_5) \ket{011} \\
	&\quad\quad\quad + i(c_2+c_3+c_6+c_7) \ket{100} +e^{-i\pi/4}(c_0-c_1+c_4-c_5) \ket{101} \\
	&\quad\quad\quad - e^{-i\pi/4}(c_0+c_1+c_4+c_5) \ket{110} + i(-c_2+c_3-c_6+c_7)\ket{111}\big].
\end{align*}
The two final states are identical, since $ie^{-i\pi/4} = e^{i\pi/4}$, hence the dynamic quantum walks are equivalent.


\bibliographystyle{qinp}
\bibliography{refs}

\end{document}